\documentclass[]{interact}

\usepackage{epstopdf}% To incorporate .eps illustrations using PDFLaTeX, etc.
\usepackage[caption=false]{subfig}% Support for small, `sub' figures and tables
\usepackage{tikz}
\usepackage{booktabs}
\usepackage{color}
\usepackage{enumitem}
\usepackage{verbatim}
\usetikzlibrary{positioning}
\usetikzlibrary{shapes.geometric}
\usepackage{setspace}

\usepackage{algorithm}
\usepackage{algpseudocode} 
\usepackage{multirow} 
\usepackage{hyperref}

\usepackage[numbers,sort&compress]{natbib}% Citation support using natbib.sty
\bibpunct[, ]{[}{]}{,}{n}{,}{,}% Citation support using natbib.sty
\theoremstyle{plain}% Theorem-like structures provided by amsthm.sty
\newtheorem{theorem}{Theorem}[section]

\newtheorem{proposition}[theorem]{Proposition}

\theoremstyle{definition}
\newtheorem{definition}[theorem]{Definition}
\newtheorem{example}[theorem]{Example}

\theoremstyle{remark}
\newtheorem{remark}{Remark}

\begin{document}

\articletype{RESEARCH ARTICLE}% Specify the article type or omit as appropriate

\title{SL-vine and $\Delta$-vine copula models}

\author{
\name{Tanabodee Monggon\textsuperscript{a}, Songkiat Sumetkijakan\textsuperscript{a}, Tippawan Santiwipanont\textsuperscript{a} and Monchai Kooakachai\textsuperscript{a}\thanks{CONTACT T. Santiwipanont. Email: Tippawan.S@chula.ac.th}}
\affil{\textsuperscript{a}Department of Mathematics and Computer Science, Faculty of Science, Chulalongkorn University, Bangkok, Thailand}}

\maketitle
\begin{abstract}
Traditional multivariate copulas often fail to capture complex dependence structures and may impose restrictive assumptions that limit their applicability to real-world data. To address these issues, vine copulas provide a flexible framework by constructing multivariate models from bivariate copulas. The three main types are R-, D-, and C-vines. To simplify computations, the ``simplifying assumption" is often applied, assuming conditional copulas are independent of conditioning variables. This results in some variables exerting minimal influence on the conditional copulas. Therefore, this study explores two subclasses of R-vines with node degrees capped at three, assessing its performance against existing models. The statistical results demonstrate that the proposed degree-constrained vine structures can serve as effective alternatives to existing vine copula models.
\end{abstract}

\begin{keywords}
Vine copula; Multivariate dependence; Graphical dependencies
\end{keywords}

\section{Introduction}
\label{sec1}
%%%%%%%%%%%%%%%%%%%%%%%%%%%%%%%%

Copulas gained its popularity in the 80's. Sklar’s theorem establishes that any \(n\)-dimensional joint distribution can be decomposed into its marginal distributions and a copula function that uniquely characterizes the underlying dependence structure. Beyond academia, copulas gained significant public attention during the financial crisis, where they were widely criticized for their role in risk modeling (see, for example,  \cite{jones2009formula}, \textit{The Formula That Felled Wall Street}). In particular, the Gaussian copula, which was extensively used in financial applications, is unable to capture heavy-tailed dependence and therefore fails to adequately model extreme market events such as those observed during the crisis.

To obtain greater modeling flexibility, multivariate copulas may be constructed using the pair-copula construction (PCC) introduced by \cite{aas2009pair}, building upon the earlier work of  \cite{joe1996families}. This framework decomposes a joint distribution into conditional bivariate copulas, allowing for more marginals flexible dependence structures. Owing to the large number of possible PCC configurations, \cite{bedford2001probability} introduced regular vines as a systematic graphical framework for organizing these constructions. Regular vines consist of connected tree sequences that describe both the marginal distributions and their dependence relationships.

Various possible configurations of PCC, giving rise to the class of regular vine specifications. Among the most widely studied structures are the R-vine, D-vine, and C-vine copulas. An R-vine is represented by a sequence of connected trees in which the edges of one tree become the nodes of the next. Within this framework, nodes of the first tree correspond to random variables, while edges represent pair-copula dependencies between variables. The D-vine constitutes a restricted form of the R-vine where each node is connected to at most two others. In contrast, the C-vine is organized around a central node, known as the root, which is directly linked to all remaining variables. 

To improve the tractability of vine copula models, the \textit{simplifying assumption} is commonly imposed, requiring conditional copulas to be independent of the conditioning variables. In a C-vine, this means that the conditional copulas are assumed to be unaffected by the root variable, whereas in highly connected vine structures, many conditioning variables are likewise assumed to have negligible influence on the associated conditional copulas. In practice, however, dependence structures often lie between these two extremes, with only a limited number of variables exerting moderate effects on neighboring conditional copulas. This motivates the study of R-vine subclasses with restricted node degrees. In particular, we focus on vine structures in which each node has degree at most three, starting with a structure consisting of stems and leaves that exhibits a hybrid characteristic of D-vines and C-vines, referred to as the SL-vine. Then a generalization of the SL-vine is introduced, which is called the \(\Delta\)-vine.

This paper proposes modified vine copula structures with degree constraints for modeling multivariate dependence and evaluates their performance relative to existing vine copula models. Section~\ref{sec2} introduces vine copula specifications and the role of regular vines in organizing pair-copula constructions. Sections~\ref{sec3}--\ref{sec4} develop subclasses of R-vines in which each node has degree at most three and present algorithms for constructing the corresponding tree sequences. In particular, these sections introduce the SL-vine and the \(\Delta\)-vine, respectively, which constitute the main contribution of this work.
Section~\ref{sec5} provides examples of simulating from SL-vine and $\Delta$-vine copulas. Section~\ref{sec6} compares the proposed models with existing vine structures using the abalone, capital-market, and Wisconsin breast cancer datasets. Finally, Section~\ref{sec7} concludes the paper and outlines possible directions for future research.

%%%%%%%%%%%%%%%%%%%%%%%%%%%%%%%%
\section{Introduction to vine copulas}\label{sec2}
%%%%%%%%%%%%%%%%%%%%%%%%%%%%%%%%

\subsection{Copula}
A function \(C:[0,1]^d \to [0,1]\) is called a \(d\)-copula if it is the joint distribution function of a random vector \(\mathbf{U}=(U_1,\dots,U_d)\), where each component \(U_k\) follows the standard uniform distribution on \([0,1]\), that is,
\[
\mathbb{P}(U_k \leq u_k)=u_k, \qquad u_k \in [0,1], \text{ and } k=1,\dots,d.
\]

The importance of copulas lies in their ability to separate marginal behavior from the dependence structure of multivariate distributions. This property is formalized by Sklar’s theorem \cite{sklar1959fonctions}, which states that for continuous random variables \(X_1,\dots,X_d\) with joint distribution function \(F\) and marginal distribution functions \(F_1,\dots,F_d\), there exists a unique copula \(C\) such that
\begin{align}\label{monggon:eq1}
    F(x_1,\dots,x_d)
    =
    C\bigl(F_1(x_1),\dots,F_d(x_d)\bigr),
    \qquad x_1,\dots,x_d \in \mathbb{R}.
\end{align}
The copula \(C\) is referred to as the copula associated with \(X_1,\dots,X_d\).

If \(X_1,\dots,X_d\) are continuous random variables with joint density function \(f\) and marginal densities \(f_1,\dots,f_d\), then
\begin{align}\label{monggon:eq2}
    f(x_1,\dots,x_d)
    =
    c\bigl(F_1(x_1),\dots,F_d(x_d)\bigr)
    \prod_{i=1}^{d} f_i(x_i),
    \qquad x_1,\dots,x_d \in \mathbb{R},
\end{align}
where \(c = \partial_1 \cdots \partial_d C\) denotes the density function of the copula \(C\).

%%%%%%%%%%%%%%%%%%%%%%%%%%%%%%%%
\subsection{Vine copula}
Although representations \eqref{monggon:eq1} and \eqref{monggon:eq2} provide a general framework for multivariate dependence modeling, they may involve highly complex multivariate copula functions and densities. \cite{joe1996families} showed that the density of a \(d\)-dimensional copula can be decomposed into \(d(d-1)/2\) bivariate copula densities. Since such decompositions are generally not unique, considerable flexibility arises in selecting an appropriate factorization. To systematically characterize and organize these factorizations, \cite{bedford2001probability} introduced the graphical framework of \textit{regular vines} (R-vines). An R-vine on \(d\) variables is represented by a sequence of \(d-1\) connected trees, $\mathcal{V} = (T_1,\dots,T_{d-1})$, where the edges of \(T_k\) become the nodes of \(T_{k+1}\). In this framework, nodes correspond to random variables, while edges represent bivariate copulas between pairs of variables. The collection of pair-copula families is denoted by \(\mathcal{B}(\mathcal{V})\), and the associated parameter set is written as \(\Theta(\mathcal{B}(\mathcal{V}))\). More formally, a vine tree sequence, or vine structure, \(\mathcal{V}=(T_1,\dots,T_{d-1})\), with \(T_k=(N_k,E_k)\) for \(1 \leq k \leq d-1\), must satisfy the following conditions.
\begin{definition}\label{R-vine}
A sequence of trees \(\mathcal{V} = (T_1, \ldots, T_{d-1})\) defined on a node set \(V_1\) containing \(d\) elements is called an \emph{R-vine} if it satisfies the following conditions:
\begin{itemize}
    \item[(i)] The first tree $T_1$ has a vertex set $N_1 =V_1$ and an edge set $E_1$.
    \item[(ii)] For each $k = 2, \ldots, d-1$, the tree $T_k$ is formed over the node set $N_k=E_{k-1}$.
    \item[(iii)]For each $k = 2, \ldots, d-1$, if two nodes of the tree $T_k$ are connected by an edge, then the two nodes, which are edges of $T_{k-1}$, have a common node \textbf{(proximity condition)}.
\end{itemize}
\end{definition}

 According to the proximity condition, an edge can connect two nodes, denoted as $\{a_1, a_2\}$ and $\{b_1, b_2\}$, only when their associated edges in the preceding tree have exactly one vertex in common.
\begin{remark}\label{rem: R-vine}
    According to the proximity condition, an edge can connect two nodes, denoted as $\{a_1, a_2\}$ and $\{b_1, b_2\}$, only when their associated edges in the preceding tree have exactly one vertex in common. Nevertheless, connecting every such pair with an edge is not permissible, since this action could create cycles and thus breach the fundamental property of a tree. This issue particularly arises when a shared node in the earlier tree is incident to three or more edges, necessitating a selective omission of connections to preserve acyclicity.
\end{remark}

A vine tree sequence consisting entirely of path graphs is called a \emph{drawable vine} (D-vine), whereas a sequence composed entirely of star graphs is referred to as a \emph{canonical vine} (C-vine), following \cite{bedford2001probability}. To illustrate the pair-copula construction, consider the decomposition of a three-dimensional density \(f(x_1,x_2,x_3)\):
\begin{align}\label{monggon:eq2.3}
f(x_1, x_2, x_3)
&=
c_{1,3 \mid 2}
\bigl(
F_{1|2}(x_1 \mid x_2),
F_{3|2}(x_3 \mid x_2)
\mid x_2
\bigr)
\,
c_{2,3}
\bigl(
F_2(x_2),
F_3(x_3)
\bigr)
\notag\\
&\quad\times
c_{1,2}
\bigl(
F_1(x_1),
F_2(x_2)
\bigr)
f_1(x_1)f_2(x_2)f_3(x_3).
\end{align}
For an index set \(D \subset \{1,\ldots,d\}\) and an index \(j \notin D\), the functions \(F_{j|D}\) and \(f_{j|D}\) denote the conditional distribution and conditional density of \(X_j\) given \(\mathbf{X}_D\), respectively, where $\mathbf{X}_D = (X_i)_{i \in D}$ with observed value $\mathbf{x}_D = (x_i)_{i \in D}$. The copula density $c_{1,3 \mid 2}(\cdot,\cdot \mid x_2)$ corresponds to the dependence structure of the conditional distribution of \((X_1,X_3)\) given \(X_2=x_2\). The decomposition in \eqref{monggon:eq2.3} can be represented by the D-vine tree structure shown in Figure~\ref{monggon:fig2.1}.

\begin{figure}[htbp]
\centering
\begin{tikzpicture}[every node/.style={thick, inner sep=1pt, font=\normalsize}, node distance=1.5 cm]

    % T1 layer
    \node (T1) at (-2,0) {$T_1$};
    
    \node[ellipse, draw] (2) [right=3cm of T1] {2};
    \node[ellipse, draw] (1) [left=of 2] {1};
    \node[ellipse, draw] (3) [right=of 2] {3};

    \draw (1) -- node[above ] {{1,2}} (2);
    \draw (2) -- node[above ] {{2,3}}(3);
    
    % T2 layer
    \node (T2) at (-2,-1.25) {$T_2$};
    \node[ellipse, draw] (23) [right=3.75 cm of T2] {2,3};
    \node[ellipse, draw] (12) [left=of 23] {1,2};
    \draw (12) -- node[above ] {{$1,3 | 2$}} (23);
\end{tikzpicture}
\caption{D-vine tree structure illustrating the construction of the three pair copulas $c_{23} - c_{12} - c_{13 \mid 2}$}
    \label{monggon:fig2.1}
\end{figure}
By Definition~\ref{R-vine}, the D-vine corresponding to \eqref{monggon:eq2.3} consists of the node set $N_1=\{1,2,3\}$
and edge set $E_1=\{\{1,2\},\{2,3\}\}$ in the first tree \(T_1\). The second tree \(T_2\) then has node set $N_2=\{\{1,2\},\{2,3\}\}$
and edge set $E_2=\bigl\{\{\{1,2\},\{2,3\}\}\bigr\}$. However, due to the proximity condition in Definition~\ref{R-vine}, the representation of edges in higher-order trees becomes increasingly intricate. To simplify the notation, \cite{beare2015vine} introduce the \emph{complete union} associated with an edge \(e=\{a,b\}\in E_k\), defined by
\begin{align*}
\mathcal{U}_e=
\begin{cases}
e,
& \text{if } k=1,\\[0.2cm]
\bigl\{
i \in N_1 :
\exists (e_1,\dots,e_{k-1}) \in E_1 \times \cdots \times E_{k-1}
\text{ such that }
i \in e_1 \in \cdots \in e
\bigr\},
& \text{if } k \geq 2.
\end{cases}
\end{align*}
The set $D_e := \mathcal{U}_a \cap \mathcal{U}_b$ is called the \emph{conditioning set} of \(e\). The corresponding \emph{conditioned sets} are defined by
\[
\mathcal{C}_{e,a}
=
\mathcal{U}_a \setminus D_e
=: a_e,
\text{ and }
\mathcal{C}_{e,b}
=
\mathcal{U}_b \setminus D_e
=: b_e.
\]

Moreover, \cite{kurowicka2006uncertainty} showed that the sets \( \mathcal{C}_{e,a} \) and \( \mathcal{C}_{e,b} \) are singletons. Hence, the notation for an edge \( e = (\mathcal{C}_{e,a}, \mathcal{C}_{e,b} \mid D_e) \) is commonly abbreviated as \( e = (a_e, b_e \mid D_e) \). For example, consider the edge \( e = \{ a = \{1,2\}, b = \{2,3\} \} \), where \( \mathcal{U}_a = \{1,2\} \) and \( \mathcal{U}_b = \{2,3\} \), yielding \( D_e = \{2\} \). It follows that \( \mathcal{C}_{e,a} = \{1\} \) and \( \mathcal{C}_{e,b} = \{3\} \), so that the edge can be written as \( e = (1,3 \mid 2) \). In general, for \(k \geq 2\), each edge in the vine sequence corresponds to a bivariate copula family denoted by \( c_{a_{e}, b_{e} | D_e} \in \mathcal{B}(\mathcal{V}) \), with associated parameter set \( \theta_{a_{e}, b_{e} | D_e} \in \Theta(\mathcal{B}(\mathcal{V})) \). The copula density \(c_{a_e, b_e | D_e}\) represents the dependence structure of the conditional distribution of \((X_{a_e}, X_{b_e})\) given \(\mathbf{X}_{D_e}\). Equivalently, \(c_{a_e, b_e | D_e}\) is the copula density of the random vector
\[
(U_{a_e | D_e}, U_{b_e | D_e}) = \big((F_{a_e | D_e} (X_{a_e} | \mathbf{X}_{D_e}), F_{b_e | D_e} (X_{b_e} | \mathbf{X}_{D_e})) \mid \mathbf{X}_{D_e}\big).
\]

The graphical structure of regular vines provides a natural framework for decomposing multivariate density functions, as demonstrated by \cite{bedford2001probability}. In particular, the pair copulas associated with the vine tree sequence yield the following decomposition:
\begin{align}\label{monggon:eq3}
        f(x_1, \dots, x_d) &=  \left[\prod_{j=1}^{d-1} \prod_{e \in E_k} c_{a_e, b_e | D_e}\left( F_{a_e | D_e} (x_{a_e} | \mathbf{x}_{D_e}), F_{b_e | D_e} (x_{b_e} | \mathbf{x}_{D_e}) \mid \mathbf{x}_{D_e} \right)\right]\notag\\
        &\quad\times\left[\prod_{k=1}^d f_k(x_k)\right].
\end{align}
Equation~(\ref{monggon:eq3}) shows that any regular vine copula density can be represented through a product of bivariate copula densities and marginal densities. However, each pair copula \( c_{a_e, b_e | D_e} \) depends on the specific value of the conditioning vector \( \mathbf{x}_{D_e} \). Consequently, the conditional dependence structure between \( a_e \) and \( b_e \) may vary with different realizations of \( \mathbf{x}_{D_e} \).

For tractability, vine copula models are commonly constructed under the \textit{simplifying assumption}, which states that conditional copulas do not depend on the values of the conditioning variables; see \cite{HOBAEKHAFF20101296, STOBER2013101} for further discussion. More precisely, a distribution \(F\) satisfies the simplifying assumption with respect to the vine structure \(\mathcal{V}\) if equation~(\ref{monggon:eq3}) holds with
\begin{align*}
        c_{a_e, b_e | D_e}( \cdot \mid \mathbf{x}_{D_e} )=c_{a_e, b_e | D_e}( \cdot)\text{ for all } \mathbf{x}_{D_e}\in\mathbb{R}^d\text{ and } e\in\mathcal{V}.
\end{align*} 
Under this assumption, equation~(\ref{monggon:eq3}) reduces to
\begin{align}\label{monggon:eq4}
        f(x_1, \dots, x_d) =  \left[\prod_{j=1}^{d-1} \prod_{e \in E_j} c_{a_e, b_e | D_e}\right]\times\left[\prod_{k=1}^d f_k(x_k)\right],
\end{align}
where $c_{a_e, b_e | D_e}:=c_{a_e, b_e | D_e}\left( F_{a_e | D_e} (x_{a_e} | \mathbf{x}_{D_e}), F_{b_e | D_e} (x_{b_e} | \mathbf{x}_{D_e}) \right)$. Within this framework, the resulting multivariate density construction is referred to as a pair-copula construction.

Vine copulas are designed to model the dependence structure of multivariate random variables. When several admissible connected tree structures are available, it is natural to prioritize edges corresponding to stronger dependencies. This principle is incorporated into the construction of vine tree sequences.
To quantify dependence strength, \textbf{measures of concordance} are commonly employed, among which \textbf{Kendall’s Tau} is one of the most widely used. Let \((X,Y)\) be a continuous random vector. Kendall’s Tau is defined by
\[
\tau_{(X, Y)} \equiv \mathbb{P}((X_1 - X_2)(Y_1 - Y_2) > 0) - \mathbb{P}((X_1 - X_2)(Y_1 - Y_2) < 0),
\]
where \((X_1,Y_1)\) and \((X_2,Y_2)\) are independent random vectors having the same joint distribution \(F\) as \((X,Y)\).

%%%%%%%%%%%%%%%%%%%%%%%%%%%%%%%%    
\subsection{Specification of a vine copula model}
A complete vine copula specification is represented by the triplet \((\mathcal{V}, \mathcal{B}(\mathcal{V}), \Theta(\mathcal{B}(\mathcal{V})))\). Here, \(\mathcal{V}\) denotes the vine tree structure, \(\mathcal{B}(\mathcal{V})\) represents the collection of bivariate copula families assigned to the edges of \(\mathcal{V}\), and \(\Theta(\mathcal{B}(\mathcal{V}))\) denotes the corresponding set of copula parameters. In this paper, we employ the algorithm of \cite{dissmann2013selecting}, which is among the most widely used methods for selecting R-vine copula models.

To estimate \((\widehat{\mathcal{V}}, \widehat{\mathcal{B}}, \widehat{\Theta})\), we introduce  Algorithm~\ref{alg:discmann1} in Appendix~\ref{app:Diss}, which utilizes a weight function \(\tau\) based on the absolute values of empirical Kendall’s \(\tau\) statistics, as proposed by \cite{dissmann2013selecting}. The weight function \(\tau\) measures the strength of dependence between pairs of variables, and the algorithm seeks to maximize the total dependence among the selected edges in each tree. This is accomplished by constructing a maximum spanning tree (MST) at each level of the vine, where the MST is a connected acyclic graph whose total edge weight is maximal.

\begin{remark}\label{DissCD}
    Algorithm~\ref{alg:discmann1} in Appendix~\ref{app:Diss} may also be adapted to construct C-vine and D-vine copula models instead of a general R-vine. For C-vines, the root node at each tree level is selected as the node with the largest total edge weight, following the approach of \cite{czado2012maximum}. For D-vines, once the ordering of variables in the first tree is specified, the structures of all subsequent trees are uniquely determined. Since D-vine trees correspond to \emph{Hamiltonian paths}---paths that visit each node exactly once without repetition---the selection of an optimal ordering reduces to finding a maximum-weight Hamiltonian path, which is a variant of the traveling salesman problem; see \cite{brechmann2010truncated}.
\end{remark}

%%%%%%%%%%%%%%%%%%%%%%%%%%%%%%%%
\section{SL-vine copula}\label{sec3}
We introduce a novel vine structure, termed the \textit{stem-and-leaf} (SL) vine, within the framework of regular vines. We first define the SL-tree as a sequence of trees on a set of \(d\) elements and show that the resulting SL-vine satisfies the properties of an R-vine tree sequence. We then derive the associated SL-vine copula distribution. Unlike R-vines, which generally allow multiple valid configurations for higher-order trees, the SL-vine becomes uniquely determined after the initial step. To support practical implementation, we propose a systematic procedure for selecting the initial tree \(T_1\) and provide an algorithm for constructing the subsequent trees \(T_2,\dots,T_{d-1}\).
\subsection{SL-vine}\label{monggon:SL}
\begin{definition}\label{SLdef}
    In the \(k^{\text{th}}\) tree of the sequence, nodes forming a continuous path are termed \textit{stems} and collected in the set \(S_k\), while nodes attached to this path are referred to as \textit{leaves} and grouped in \(L_k\). For \(2m\) random variables, where \(m \in \mathbb{N}\), we define a sequence of \(2m - 1\) trees constituting the SL-vine structure as follows:
    \begin{align*}
         T_{1} &\text{ with  } N_1= S_1 \cup L_1 = \left\{1, 2,\dots, m\right\}\cup \left\{2m,\dots, m+1\right\} \text{ and }  \\ &E_1=\{(1,2), (2,3),\dots, (m,m+1)\}\cup\{ (2,2m), (3,2m-1),\dots,(m,m+2)\}.\\
        T_{2} &\text{ with } N_2=S_2 \cup L_2=\{(1,2),\dots, (m,m+1)\}\cup\{ (2,2m),\dots,(m,m+2)\} \text{ and }\\ &E_2=\{((1,2),(2,3)), ((2,3),(3,4)),\dots, ((m-1,m),(m,m+1))\}\\  &\quad\quad\cup\{ ((2,3),(2,2m)),\dots,((m,m+1),(m,m+2))\}\\
        &\quad= \left\{ (1, 3 \mid 2), (2, 4 \mid 3), \dots, (m-1, m+1 \mid m) \right\} \\ &\quad\quad\cup \left\{ (3, 2m \mid 2), (4, 2m-1 \mid 3), \dots, (m+1, m+2 \mid m) \right\}.\\
        \text{ For } &3\leq k\leq 2m-2. \text{ If } k=2l+1 \text{ for some } 1\leq l\leq m-2, \text{ then }
    \end{align*}
    \begin{align*}
        T_{k} &\text{ with }  N_k =S_k \cup L_k= \left\{ A_k^1, A_k^2, \dots, A_k^{m-l} \right\}\cup  \left\{ B_k^1, B_k^2, \dots, B_k^{m-l} \right\} \text{ and }\\ &E_k = \left\{ (A_k^1, A_k^2), (A_k^2, A_k^3), \dots, (A_k^{m-l-1}, A_k^{m-l}),(A_k^{m-l}, B_k^{m-l}) \right\}\\  &\quad\quad\cup \left\{ (A_k^2, B_k^1), (A_k^3, B_k^2), \dots, (A_k^{m-l}, B_k^{m-l-1}) \right\}.
        \\
        \text{ If } k&=2l+1 \text{ for some } 2\leq l\leq m-1, \text{ then }\\
        T_{k} &\text{ with }  N_k = S_k \cup L_k = \left\{ A_k^1, A_k^2, \dots, A_k^{m-l}, A_k^{m-l+1}\right\}\cup \left\{B_k^1, B_k^2, \dots, B_k^{m-l} \right\} \text{ and }\\
        &E_k = \left\{ (A_k^1, A_k^2), (A_k^2, A_k^3), \dots, (A_k^{m-l}, A_k^{m-l+1}) \right\}\\  &\quad\quad\cup \left\{ (A_k^2, B_k^1), (A_k^3, B_k^2), \dots, (A_k^{m-l}, B_k^{m-l-1}), (A_k^{m-l+1}, B_k^{m-l}) \right\}.\\
        &T_{2m-1} \text{ with }  N_{2m-1} = \left\{ A_{2m-1}^{1}, B_{2m-1}^{1} \right\} \text{ and } E_{2m-1} = \left\{ (A_{2m-1}^{1}, B_{2m-1}^{1}) \right\},
    \end{align*}
    where $ A_2^{i}:=(i,i+1) \mid \varnothing := (i,i+1)
 $ and $ B_2^{i}:= (i+1,2m-i+1) \mid \varnothing := (i+1,2m-i+1),$ and $ A_k^{i}:=(i,k-1+i) \mid (i+1:k-2+i)$ and
    $ B_k^{i}:=(k-1+i,2m-i+1) \mid (i+1:k-2+i)$ for $k\geq3$, $(x:y)$ denotes  $(x,x+1,\dots,y)$ and $(x,y):=\{x,y\}$.
\end{definition}
\begin{remark}
    Although the SL-vine has been defined for an even number of variables, it can be naturally extended to the case of \(2m - 1\) variables. In this setting, the SL-vine is represented by the sequence \((T_2, T_3, \dots, T_{2m-1})\), where \((T_1, T_2, \dots, T_{2m-1})\) denotes the corresponding R-vine tree sequence defined previously.
\end{remark}

    Figure \ref{monggon:fig3.1} illustrates the structure of the proposed SL-vine tree sequence. Moreover, this construction satisfies the conditions of an R-vine tree sequence as given in Definition \ref{R-vine}, thereby establishing its validity as a regular vine. A formal proof is provided below.

\begin{figure}[htbp]
\centering
\begin{tikzpicture}[every node/.style={thick, inner sep=1pt, font=\footnotesize}, node distance=1.1cm]

    % T1 layer
    \node (T1) at (-2,0) {$T_1$};
    
    \node[ellipse, draw] (2) [right=3cm of T1] {2};
    \node[ellipse, draw] (1) [left=of 2] {1};
    \node[ellipse, draw] (3) [right=of 2] {3};
    \node[ellipse, draw] (dots1) [right=of 3, draw=none] {$\dots$};
    \node[ellipse, draw] (m) [right=of dots1] {$m$};
    \node[ellipse, draw] (m+1) [right=of m] {$m+1$};
    \node[ellipse, draw] (m+2) [above left =of m] {$m+2$};
    \node[ellipse, draw] (2m-1) [above left =of 3] {$2m-1$};
    \node[ellipse, draw] (2m) [above left =of 2] {$2m$};

    \draw (1) -- (2);
    \draw (2) -- (3);
    \draw (3) -- (dots1);
    \draw (dots1) --  (m) ;
    \draw (m) --  (m+1) ;
    \draw (m+2) -- (m);
    \draw (3) --  (2m-1);
    \draw (2) --  (2m);
    
    % T2 layer
    \node (T2) at (-2,-1.75) {$T_2$};
    \node[ellipse, draw] (23) [right=3.5 cm of T2] {2,3};
    \node[ellipse, draw] (12) [left=of 23] {1,2};
    \node[ellipse, draw] (34) [right=of 23] {3,4};
    \node[ellipse, draw] (dots2) [right=of 34, draw=none] {$\dots$};
    \node[ellipse, draw] (m m+1) [right=of dots2] {$m,m+1$};
    \node[ellipse, draw] (m m+2) [above left =of m m+1] {$m,m+2$};
    \node[ellipse, draw] (3 2m-1) [above left =of 34 ] {$3,2m-1$};
    \node[ellipse, draw] (2 2m) [above left =of 23 ] {$2,2m$};
    
    \draw (12) -- (23);
    \draw (23) -- (34);
    \draw (34) -- (dots2);
    \draw (dots2) --  (m m+1) ;
    \draw (m m+1) --  (m m+2);
    \draw (3 2m-1) --  (34);
    \draw (2 2m) --  (23);

    % T2.5 layer
    \node (vdots) at (-2,-2.25) {$ \vdots $};
    \node (vdots2) at (4,-2.25) {$ \vdots $};
    
    % T2l+1 layer
    \node (T2l+1) at (-2,-4.25) {$T_{2l+1}$};
    
    \node[ellipse, draw] (Ak2) [right=3.25 cm of T2l+1] {$A_{2l+1}^{2}$};
    \node[ellipse, draw] (Ak1) [left=of Ak2] {$A_{2l+1}^{1}$};
    \node[ellipse, draw] (dots3) [right=of Ak2, draw=none] {$\dots$};
    \node[ellipse, draw] (Akm-l) [right=of dots3] {$A_{2l+1}^{m-l}$};
    \node[ellipse, draw] (Bkm-l) [right=of Akm-l] {$B_{2l+1}^{m-l}$};
    \node[ellipse, draw] (Bkm-l-1) [above left =of Akm-l] {$B_{2l+1}^{m-l-1}$};
    \node[ellipse, draw] (Bk1) [above left =of Ak2] {$B_{2l+1}^{1}$};

    \draw (Ak1) -- (Ak2);
    \draw (Ak2) -- (dots3);
    \draw (dots3) --  (Akm-l) ;
    \draw (Akm-l) --  (Bkm-l) ;
    \draw (Bkm-l-1) -- (Akm-l);
    \draw (Bk1) --  (Ak2);

    % T2l layer
    \node (T2l) at (-2,-6.2) {$T_{2l}$};
    
    \node[ellipse, draw] (Ak2) [right=4.25 cm of T2l] {$A_{2l}^{2}$};
    \node[ellipse, draw] (Ak1) [left=of Ak2] {$A_{2l}^{1}$};
    \node[ellipse, draw] (dots3) [right=of Ak2, draw=none] {$\dots$};
    \node[ellipse, draw] (Akm-l+1) [right=of dots3] {$A_{2l}^{m-l+1}$};
    \node[ellipse, draw] (Bkm-l) [above left =of Akm-l+1] {$B_{2l}^{m-l}$};
    \node[ellipse, draw] (Bk1) [above left =of Ak2] {$B_{2l}^{1}$};

    \draw (Ak1) -- (Ak2);
    \draw (Ak2) -- (dots3);
    \draw (dots3) --  (Akm-l+1) ;
    \draw (Bkm-l) -- (Akm-l+1);
    \draw (Bk1) --  (Ak2);

    % T2.5 layer
    \node (vdots) at (-2,-6.8) {$ \vdots $};
    \node (vdots2) at (4,-6.8) {$ \vdots $};
    
    % T2m-1 layer
    \node (T2m-1) at (-2,-7.4) {$T_{2m-1}$};
    \node[ellipse, draw] (1) [right=3.5 cm of T2m-1] {$A_{2m-1}^{1}$};
    \node[ellipse, draw] (2) [right=of 1] {$B_{2m-1}^{1}$};
    \draw (1) -- (2);
\end{tikzpicture}
\caption{Vine tree structures of an SL-vine} \label{monggon:fig3.1}
\end{figure}
\begin{proposition}\label{monggon:prop1}
    The SL-vine tree sequence defined in Definition \ref{SLdef} satisfies conditions (i)–(iii) of the R-vine tree sequence given in Definition \ref{R-vine}.
\end{proposition}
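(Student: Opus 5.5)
Conditions (i) and (ii) are mostly bookkeeping, so most of the proof goes into (iii) and into checking that each $T_k$ really is a tree. Condition (i) holds by construction: $N_1=\{1,\dots,2m\}$ and $E_1$ has $(m)+(m-1)=2m-1$ edges forming a path $1-2-\cdots-(m+1)$ with pendant leaves $2m-i+2$ attached at $i$ for $i=2,\dots,m$. This graph is connected with $2m-1$ edges on $2m$ vertices, hence a tree.

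For (ii) I will check that $N_{k+1}=E_k$ for every $k$ by matching labels. The plan is to show by induction on $k\ge 2$ that, using the complete-union notation of the paper, the edges of $T_k$ are exactly the nodes $A_{k+1}^i$ and $B_{k+1}^i$ listed for $T_{k+1}$. Concretely:
\begin{itemize}
\item a stem--stem edge $(A_k^i,A_k^{i+1})$ has $\mathcal{U}_{A_k^i}=\{i,\dots,k-1+i\}$ and $\mathcal{U}_{A_k^{i+1}}=\{i+1,\dots,k+i\}$, so its intersection is $\{i+1,\dots,k-1+i\}$ and the edge equals $(i,k+i\mid i+1:k-1+i)=A_{k+1}^i$;
\item a stem--leaf edge $(A_k^{i+1},B_k^{i})$ has unions $\{i+1,\dots,k+i\}$ and $\{i+1,\dots,k-1+i,2m-i+1\}$, which gives $(k+i,2m-i+1\mid i+1:k-1+i)=B_{k+1}^{i}$.
\end{itemize}
I will then count the nodes in each parity case of Definition \ref{SLdef} ($m-l$ versus $m-l+1$ stems). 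This shows that $|E_k|=|N_k|-1$ and that the index ranges of the new $A$'s and $B$'s match the node list of $T_{k+1}$, including the extra end edge $(A_k^{m-l},B_k^{m-l})$ in the odd case. Since a graph that is connected with $|N|-1$ edges is a tree, connectivity of each $T_k$ follows directly from its stem path plus pendant leaves.

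For (iii), each edge of $T_k$, $k\ge2$, joins two nodes that were edges of $T_{k-1}$, and I must show those two nodes share a node of $T_{k-1}$. I will split into three edge types.
\begin{itemize}
\item For a stem--stem edge $(A_k^i,A_k^{i+1})$, both nodes arise from $T_{k-1}$ edges along the stem path that share the stem node $A_{k-1}^{i+1}$.
\item For a stem--leaf edge $(A_k^{i+1},B_k^i)$, the node $A_k^{i+1}=(A_{k-1}^{i+1},A_{k-1}^{i+2})$ and the node $B_k^i=(A_{k-1}^{i+1},B_{k-1}^{i})$ or $(A_{k-1}^{i+2},B_{k-1}^{i+1})$, depending on which parity convention shifted the indices, share a stem node.
\item The terminal edge $(A_k^{m-l},B_k^{m-l})$ is handled in the same way.
\end{itemize}
The main obstacle is this index bookkeeping. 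The alternation between the two parity cases shifts the leaf attachment by one position, so the work is in showing that each $B_k^i$ arises from a $T_{k-1}$ edge incident to the same stem as the adjacent $A_k^{i+1}$. I would verify this by tracking the complete unions and noting that the two nodes joined by any edge must have unions differing in exactly one element, which is equivalent to sharing a node one level down.
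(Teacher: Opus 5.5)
Your proposal is correct and follows essentially the paper's route. For (ii) you use complete-union label matching to get $A_{k+1}^i=(A_k^i,A_k^{i+1})$ and $B_{k+1}^i=(A_k^{i+1},B_k^i)$, and for (iii) you split into stem--stem, stem--leaf and terminal-edge cases to exhibit the common node in $T_{k-1}$, which matches the paper's Cases 1.1--2.3. Your edge-count check that each $T_k$ is a tree, and your explicit handling of $(A_k^{m-l},B_k^{m-l})\mapsto A_{k+1}^{m-l}$, are small additions that the paper leaves implicit.

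The parity hedge in your stem--leaf case is unnecessary: your own (ii) gives $B_k^i=(A_{k-1}^{i+1},B_{k-1}^i)$ in every tree. The only boundary effect is that for even $k=2l$ the last stem node is $A_k^{m-l+1}=(A_{k-1}^{m-l+1},B_{k-1}^{m-l+1})$, which still contains the shared node $A_{k-1}^{m-l+1}$. So you do not need the union-difference fallback, whose converse direction would be a separate lemma you would have to prove.
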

\begin{proof}
     We begin by showing that the SL-vine tree sequence \(\{T_1, \dots, T_{2m-1}\}\) satisfies conditions (i)–(iii) as stated in Definition \ref{R-vine}. Condition (i) is immediately satisfied by the definition of \(T_1\). For condition (ii), which requires \(N_k = E_{k-1}\) for all \(k \geq 2\), the case \(k = 2\) directly follows from Definition \ref{SLdef}. To verify the condition for \(k \geq 3\), it suffices to demonstrate that \(A_k^i = (A_{k-1}^i, A_{k-1}^{i+1})\) and \(B_k^i = (A_{k-1}^{i+1}, B_{k-1}^i)\). For the former case, since
\begin{align*}
     A_{k-1}^{i}&= (i,k-2+i) \mid (i+1, i+2,\dots, k-3+i)\text{ and }\\
     A_{k-1}^{i+1}&=(i+1,k-1+i) \mid (i+2, i+3, \dots,k-2+i),
\end{align*}
the conditioning set 
\begin{align*}
    D_{(A_{k-1}^i, A_{k-1}^{i+1})}&=U_{A_{k-1}^i} \cap U_{A_{k-1}^{i+1}}\\
    &= \{{\color{blue}i},{\color{red}k-2+i, i+1, i+2,\dots,k-3+i}\} \\
    &\quad\cap \{{\color{red}i+1},{\color{blue}k-1+i}, {\color{red}i+2, i+3, \dots,k-2+i}\}\\
    &=\{{\color{red}i+1, i+2,\dots, k-2+i}\}=(i+1:k-2+i) \text{ and }
\end{align*}
the conditioned set
\begin{align*}
    {\Big(U_{A_{k-1}^i}\backslash D_{(A_{k-1}^i, A_{k-1}^{i+1})}, U_{A_{k-1}^{i+1}}\backslash D_{(A_{k-1}^i, A_{k-1}^{i+1})}\Big)}=({\color{blue}i},{\color{blue} k-1+i}). 
\end{align*}
Thus, $ (A_{k-1}^i, A_{k-1}^{i+1})=(i,k-1+i) \mid (i+1:k-2+i)=A_k^i $. For the latter case, since
\begin{align*}
     A_{k-1}^{i+1}&=(i+1,k-1+i) \mid (i+2,i+3, \dots, k-2+i) \text{ and }\\
     B_{k-1}^{i}&=(k-2+i,2m-i+1) \mid (i+1, i+2,\dots, k-3+i),
\end{align*}
the conditioning set
\begin{align*}
    D_{(A_{k-1}^{i+1}, B_{k-1}^i)}&=U_{A_{k-1}^{i+1}} \cap U_{B_{k-1}^i}\\
    &= \{{\color{red}i+1},{\color{blue}k-1+i}, {\color{red}i+2, i+3, \dots,k-2+i}\}\\
    &\quad\cap \{{\color{red}k-2+i},{\color{blue}2m-i+1},{\color{red} i+1, i+2,\dots, k-3+i}\}\\
    &=\{{\color{red}i+1, i+2,\dots, k-2+i}\}=(i+1:k-2+i) \text{ and }
\end{align*}
the conditioned set
\begin{align*}
    {\Big(U_{A_{k-1}^{i+1}}\backslash D_{(A_{k-1}^{i+1}, B_{k-1}^i)}, U_{B_{k-1}^i}\backslash D_{(A_{k-1}^{i+1}, B_{k-1}^i)}\Big)}=({\color{blue}k-1+i},{\color{blue}2m-i+1}). 
\end{align*}
 Therefore, $(A_{k-1}^{i+1}, B_{k-1}^i)=(k-i+1,2m-i+1)\mid (i+1:k-2+i)= B_k^i $. Consequently, according to Definition \ref{SLdef}, we have \( N_k = E_{k-1} \) for every \( k \) in the SL-vine tree sequence. The last condition to be verified is the proximity condition, which requires that for all \( k \geq 2 \), two nodes in \( T_k \) may be connected only if their corresponding edges in the previous tree \( T_{k-1} \) share a common node. For the case \( k = 2 \), this is straightforward. To establish the result for \( k \geq 3 \), we consider two separate cases.  Figures~\ref{Case I} and~\ref{Case II} illustrate the procedures for Cases~1 and~2, respectively.
\\
\textbf{Case I} $k=2l+1$ for some  $1\leq l\leq m-2$. Let $(e_1, e_2) \in E_k$.
    \begin{quote}
        \textbf{Case 1.1} $e_1 = {\color{teal}A_k^i}$, $e_2 = {\color{teal}A_k^{i+1}}$ for some $i = 1, 2, 3, \dots, m-l-1$. \\
        In $T_{k-1}$, the corresponding edges of $e_1$ and $e_2$ are $(A_{k-1}^i, A_{k-1}^{i+1})$ and $(A_{k-1}^{i+1}, A_{k-1}^{i+2})$, respectively, which have the common node $A_{k-1}^{i+1}$.

        \textbf{Case 1.2} $e_1 = {\color{teal}A_k^{m-l}}$, $e_2 = {\color{red}B_k^{m-l}}$. \\
        In $T_{k-1}$, the corresponding edges of $e_1$ and $e_2$ are $(A_{k-1}^{m-l}, A_{k-1}^{m-l+1})$ and $(A_{k-1}^{m-l+1}, B_{k-1}^{m-l})$, respectively, which have the common node $A_{k-1}^{m-l+1}$.
        
        \textbf{Case 1.3} $e_1 = {\color{teal}A_k^i}$, $e_2 = {\color{violet}B_k^{i-1}}$ for some $i = 2, 3, \dots, m-l$. \\
        In $T_{k-1}$, the corresponding edges of $e_1$ and $e_2$ are  $(A_{k-1}^i, A_{k-1}^{i+1})$ and $(A_{k-1}^i, B_{k-1}^{i-1})$,  respectively, which have the common node $A_{k-1}^i$.
    \end{quote}

\begin{figure}[htbp]
\centering
\begin{tikzpicture}[every node/.style={thick, inner sep=1pt, font=\scriptsize}, node distance=1cm]
    % T2 layer
    \node (T2) at (-2,-4.25) {$T_{k-1}$};
    \node[ellipse, draw] (23) [right=2.5 cm of T2] {$A_{k-1}^{2}$};
    \node[ellipse, draw] (12) [left=of 23] {$A_{k-1}^{1}$};
    \node[ellipse, draw] (34) [right=of 23] {$A_{k-1}^{3}$};
    \node[ellipse, draw] (dots2) [right= 0.5 cm of 34, draw=none] {$\dots$};
    \node[ellipse, draw] (m m+1) [right= 0.5 cm of dots2] {$A_{k-1}^{m-l}$};
    \node[ellipse, draw] (m m+2) [above left =of m m+1] {$B_{k-1}^{m-l-1}$};
    \node[ellipse, draw] (3 2m-1) [above left =of 34 ] {$B_{k-1}^{2}$};
    \node[ellipse, draw] (2 2m) [above left =of 23 ] {$B_{k-1}^{1}$};
    \node[ellipse, draw] (x) [right= 1.2 cm of m m+1] {$A_{k-1}^{m-l+1}$};
    \node[ellipse, draw] (y) [above left =of x] {$B_{k-1}^{m-l}$};

    \draw [teal](12) --node[below]{$A_{k}^{1}$} (23);
    \draw [teal](23) --node[below]{$A_{k}^{2}$} (34);
    \draw [teal](34) -- (dots2);
    \draw [teal](dots2) --  (m m+1) ;
    \draw [violet](m m+1) --node[above right]{$B_{k}^{m-l-1}$}  (m m+2);
    \draw [violet](3 2m-1) --node[above right]{$B_{k}^{2}$}  (34);
    \draw [violet](2 2m) --node[above right]{$B_{k}^{1}$}  (23);
    \draw [teal](m m+1) --node[below]{$A_{k}^{m-l}$}  (x);
    \draw [red](x) --node[above right]{$B_{k}^{m-l}$}  (y);

     % T2 layer
    \node (T2) at (-2,-6.4) {$T_{k}$};
    \node[ellipse, draw, teal] (23) [right=3 cm of T2] {$A_{k}^{2}$};
    \node[ellipse, draw, teal] (12) [left=of 23] {$A_{k}^{1}$};
    \node[ellipse, draw, teal] (34) [right=of 23] {$A_{k}^{3}$};
    \node[ellipse, draw] (dots2) [right= 0.5 cm of 34, draw=none] {$\dots$};
    \node[ellipse, draw, teal] (m m+1) [right= 0.5 cm of dots2] {$A_{k}^{m-l}$};
    \node[ellipse, draw, violet] (m m+2) [above left =of m m+1] {$B_{k}^{m-l-1}$};
    \node[ellipse, draw, violet] (3 2m-1) [above left =of 34 ] {$B_{k}^{2}$};
    \node[ellipse, draw, violet] (2 2m) [above left =of 23 ] {$B_{k}^{1}$};
    \node[ellipse, draw, red] (x) [right= 1.2 cm of m m+1] {$B_{k}^{m-l}$};

    \draw (12) -- (23);
    \draw (23) -- (34);
    \draw (34) -- (dots2);
    \draw (dots2) --  (m m+1) ;
    \draw (m m+1) --  (m m+2);
    \draw (3 2m-1) --  (34);
    \draw (2 2m) --  (23);
    \draw (m m+1) --  (x);
    
\end{tikzpicture}
\caption{Case I} \label{Case I}
\end{figure}
    
    \noindent \textbf{Case II} $k=2l$  for some $2\leq l\leq m-1$. Let $(e_1, e_2) \in E_k$.

    \begin{quote}
        \textbf{Case 2.1} $e_1 = {\color{teal}A_k^i}$, $e_2 = {\color{teal}A_k^{i+1}}$ for some $i = 1, 2, 3, \dots, m-l-1$. \\
        In $T_{k-1}$, the corresponding edges of $e_1$ and $e_2$ are $(A_{k-1}^i, A_{k-1}^{i+1})$  and $(A_{k-1}^{i+1}, A_{k-1}^{i+2})$, respectively, which have the common node $A_{k-1}^{i+1}$.

        \textbf{Case 2.2} $e_1 = {\color{teal}A_k^{m-l}}$, $e_2 = {\color{red}A_k^{m-l+1}}$. \\
        In $T_{k-1}$, the corresponding edges of $e_1$ and $e_2$ are $(A_{k-1}^{m-l}, A_{k-1}^{m-l+1})$ and $(A_{k-1}^{m-l+1}, B_{k-1}^{m-l+1})$, respectively, which have the common node $A_{k-1}^{m-l+1}$.

        \textbf{Case 2.3} $e_1 = {\color{teal}A_k^i}$, $e_2 = {\color{violet}B_k^{i-1}}$ for some $i = 2, 3, \dots, {\color{red}m-l+1}$. \\
        In $T_{k-1}$, the corresponding edges of $e_1$ and $e_2$ are $(A_{k-1}^i, A_{k-1}^{i+1})$ and  $(A_{k-1}^i, B_{k-1}^{i-1})$, respectively, which have the common node $A_{k-1}^i$.
\begin{figure}[htbp]
\centering
\begin{tikzpicture}[every node/.style={thick, inner sep=1pt, font=\scriptsize}, node distance=1cm]
    % T2 layer
    \node (T2) at (-2,-2.25) {$T_{k-1}$};
    \node[ellipse, draw] (23) [right=2.5 cm of T2] {$A_{k-1}^{2}$};
    \node[ellipse, draw] (12) [left=of 23] {$A_{k-1}^{1}$};
    \node[ellipse, draw] (34) [right=of 23] {$A_{k-1}^{3}$};
    \node[ellipse, draw] (dots2) [right= 0.25 cm of 34, draw=none] {$\dots$};
    \node[ellipse, draw] (m m+1) [right=0.25 cm of dots2] {$A_{k-1}^{m-l}$};
    \node[ellipse, draw] (m m+2) [above left =of m m+1] {$B_{k-1}^{m-l-1}$};
    \node[ellipse, draw] (3 2m-1) [above left =of 34 ] {$B_{k-1}^{2}$};
    \node[ellipse, draw] (2 2m) [above left =of 23 ] {$B_{k-1}^{1}$};
    \node[ellipse, draw] (x) [right=of m m+1] {$A_{k-1}^{m-l+1}$};
    \node[ellipse, draw] (y) [above left =of x] {$B_{k-1}^{m-l}$};
    \node[ellipse, draw] (z) [right=of x] {$B_{k-1}^{m-l+1}$};
    
    \draw [teal](12) --node[below]{$A_{k}^{1}$} (23);
    \draw [teal](23) --node[below]{$A_{k}^{2}$} (34);
    \draw [teal](34) -- (dots2);
    \draw [teal](dots2) --  (m m+1) ;
    \draw [violet](m m+1) --node[above right]{$B_{k}^{m-l-1}$}  (m m+2);
    \draw [violet](3 2m-1) --node[above right]{$B_{k}^{2}$}  (34);
    \draw [violet](2 2m) --node[above right]{$B_{k}^{1}$}  (23);
    \draw [teal](m m+1) --node[below]{$A_{k}^{m-l}$}  (x);
    \draw [violet](x) --node[above right]{$B_{k}^{m-l}$}  (y);
    \draw [red](x) --node[below]{$A_{k}^{m-l+1}$}  (z);

     % T2 layer
    \node (T2) at (-2,-4.4) {$T_{k}$};
    \node[ellipse, draw, teal] (23) [right=3.5 cm of T2] {$A_{k}^{2}$};
    \node[ellipse, draw, teal] (12) [left=of 23] {$A_{k}^{1}$};
    \node[ellipse, draw, teal] (34) [right=of 23] {$A_{k}^{3}$};
    \node[ellipse, draw] (dots2) [right= 0.25 cm of 34, draw=none] {$\dots$};
    \node[ellipse, draw, teal] (m m+1) [right=0.25 cm of dots2] {$A_{k}^{m-l}$};
    \node[ellipse, draw, violet] (m m+2) [above left =of m m+1] {$B_{k}^{m-l-1}$};
    \node[ellipse, draw, violet] (3 2m-1) [above left =of 34 ] {$B_{k}^{2}$};
    \node[ellipse, draw, violet] (2 2m) [above left =of 23 ] {$B_{k}^{1}$};
    \node[ellipse, draw, red] (x) [right=of m m+1] {$A_{k}^{m-l+1}$};
    \node[ellipse, draw, violet] (y) [above left =of x] {$B_{k}^{m-l}$};

    \draw (12) -- (23);
    \draw (23) -- (34);
    \draw (34) -- (dots2);
    \draw (dots2) --  (m m+1) ;
    \draw (m m+1) --  (m m+2);
    \draw (3 2m-1) --  (34);
    \draw (2 2m) --  (23);
    \draw (m m+1) --  (x);
    \draw (x) --  (y);

\end{tikzpicture}
\caption{Case II}\label{Case II}
\end{figure}
    \end{quote}
\end{proof}

 Hence, the SL-vine tree sequence satisfies the R-vine conditions in Definition \ref{R-vine}. A regular vine copula constructed from an SL-vine tree sequence \(\mathcal{V}\), together with a collection of bivariate copulas \(\mathcal{B}(\mathcal{V})\) and their associated parameters \(\boldsymbol{\theta}(\mathcal{B}(\mathcal{V}))\), is referred to as an SL-vine copula, denoted by \(\mathcal{SL}\). The following theorem provides the construction of the SL-vine distribution for a \(2m\)-dimensional random vector \(\mathbf{X} = (X_1, \dots, X_{2m})\).

\begin{theorem}\label{monggon:thm1}
    The joint density \(f(x_1, \dots, x_{2m})\) induced by the SL-vine copula is given by
  \begin{align}\label{monggon:eq5}
      f(x_1, \dots, x_{2m}) &= 
    \Bigg[\prod_{\ell=1}^{m-1}  \prod_{j=1}^{m-\ell-1} c_{A_{2\ell+2}^j}\Bigg] \times
    \Bigg[\prod_{\ell=1}^{m-1}  \prod_{j=1}^{m-\ell-1} c_{B_{2\ell+2}^j} 
    \Bigg]\times\Bigg[\prod_{\ell=1}^{m-1} c_{A_{2\ell+1}^{m-\ell}, B_{2\ell+1}^{m-\ell}}\Bigg]\notag\\ 
    &\quad\times 
    \Bigg[ \prod_{\ell=2}^{m-1}\prod_{j=1}^{m-\ell} c_{A_{2\ell+1}^j}\Bigg] \times
    \Bigg[ \prod_{\ell=2}^{m-1}\prod_{j=1}^{m-\ell} c_{ B_{2\ell+1}^j}\Bigg]\times\Bigg[\prod_{i=1}^{m-1} c_{(i,i+2)|(i+1)}\Bigg]\notag\\
    &\quad\times\Bigg[\prod_{i=m}^{2m-2} c_{(2m-i+1,i+2)|(2m-i)}\Bigg]\times\Bigg[\prod_{i=1}^{m} c_{i,i+1}\Bigg]\notag\\
    &\quad\times\Bigg[\prod_{i=m+1}^{2m-1} c_{(2m-i+1,i+1)}\Bigg]\times\Bigg[\prod_{i=1}^{2m} f_i(x_i)\Bigg],
  \end{align}
  where $ A_k^{i}$ and 
    $ B_k^{i}$ are defined in Definition \ref{SLdef}.
\end{theorem}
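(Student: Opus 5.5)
The plan is to apply the general regular-vine density decomposition \eqref{monggon:eq4}, which holds for any R-vine tree sequence under the simplifying assumption. Proposition~\ref{monggon:prop1} shows that the SL-vine of Definition~\ref{SLdef} is an R-vine, so it only remains to enumerate the edge sets $E_1,\dots,E_{2m-1}$ explicitly. We then check that every edge contributes exactly one pair-copula factor to \eqref{monggon:eq5}, and that every factor in \eqref{monggon:eq5} comes from exactly one edge. The marginal factor $\prod_{i=1}^{2m} f_i(x_i)$ carries over unchanged.

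The bookkeeping proceeds tree by tree.

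\begin{itemize}
\item \textbf{Tree $T_1$.} The stem edges $(i,i+1)$, $i=1,\dots,m$, give $\prod_{i=1}^m c_{i,i+1}$. The leaf edges $(j,2m-j+2)$, $j=2,\dots,m$, reindexed by $i=2m-j+1\in\{m+1,\dots,2m-1\}$, give the product $\prod_{i=m+1}^{2m-1}c_{(2m-i+1,i+1)}$.
\item \textbf{Tree $T_2$.} The stem edges $(i,i+2\mid i+1)$, $i=1,\dots,m-1$, and the leaf edges $(j+1,2m-j+1\mid j)$, $j=2,\dots,m$, are reindexed in the same way by $i=2m-j$. They match the two products with a single conditioning variable.
\item \textbf{Trees $T_k$, $k\ge 3$.} By Proposition~\ref{monggon:prop1}, $E_{k}=N_{k+1}$, and each edge is one of the labels $A_{k+1}^j$ or $B_{k+1}^j$ defined in Definition~\ref{SLdef}. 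Except for the distinguished edge $(A_k^{m-l},B_k^{m-l})$ in odd trees, every edge of $T_k$ is therefore written as $c_{A_{k+1}^j}$ or $c_{B_{k+1}^j}$. Splitting by parity of $k+1$ yields the even-index products over $A_{2\ell+2}^j,B_{2\ell+2}^j$ and the odd-index products over $A_{2\ell+1}^j,B_{2\ell+1}^j$.
\end{itemize}

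The distinguished edge of each odd tree $T_{2\ell+1}$, together with the final edge of $T_{2m-1}$, gives the factor $\prod_{\ell} c_{A_{2\ell+1}^{m-\ell},B_{2\ell+1}^{m-\ell}}$.

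For each tree I will count its edges from Definition~\ref{SLdef}: $2(m-l)$ in $T_{2l+1}$ and $2(m-l)+1$ in $T_{2l}$. I will then confirm that these counts match the index ranges $j=1,\dots,m-\ell-1$ and $j=1,\dots,m-\ell$ in \eqref{monggon:eq5}. As a global check, the total over all trees should be $2m-1+(2m-2)+\dots+1=m(2m-1)=\binom{2m}{2}$.

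The main obstacle is this index bookkeeping. The range conventions in Definition~\ref{SLdef} overlap at the endpoints $l=1$ and $l=m-1$, and the last tree $T_{2m-1}$ must be matched with the boundary term $\ell=m-1$ of the distinguished-edge product. I would handle the cases $l=1$ and $l=m-1$ separately, and verify small cases such as $m=2$ and $m=3$ by listing every edge explicitly. This guards against off-by-one errors between the $A^j_k$/$B^j_k$ labelling of nodes in $T_{k+1}$ and edges in $T_k$.
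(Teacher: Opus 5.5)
Your route is the same as the paper's. You write \eqref{monggon:eq4} tree by tree for the edge sets of Definition~\ref{SLdef}. You then rename each edge of $T_k$ ($k\ge 3$) as a node of $T_{k+1}$ via $A_k^i=(A_{k-1}^i,A_{k-1}^{i+1})$ and $B_k^i=(A_{k-1}^{i+1},B_{k-1}^i)$, which are proved inside Proposition~\ref{monggon:prop1}, and you keep $(A_{2\ell+1}^{m-\ell},B_{2\ell+1}^{m-\ell})$ in pair form. The whole proof is this index bookkeeping, and two of your concrete claims are wrong in a way that makes the matching fail as written.

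\textbf{The $T_2$ leaf edges.} The edge joining $(j,j+1)$ and $(j,2m-j+2)$, for $j=2,\dots,m$, is $(j+1,2m-j+2\mid j)$, not $(j+1,2m-j+1\mid j)$. At $j=m$ your formula even gives the degenerate $(m+1,m+1\mid m)$. Under $i=2m-j$ your version becomes $(2m-i+1,i+1\mid 2m-i)$, which is not a factor of \eqref{monggon:eq5}. The corrected edge becomes exactly $(2m-i+1,i+2\mid 2m-i)$.

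\textbf{The per-tree counts.} The numbers $2(m-l)$ and $2(m-l)+1$ count the \emph{nodes} of $T_{2l+1}$ and $T_{2l}$. The edge counts are $2(m-l)-1$ and $2(m-l)$, i.e.\ $2m-k$ for $T_k$. That is exactly what your global sum $\sum_{k=1}^{2m-1}(2m-k)=m(2m-1)$ presupposes, so your per-tree counts contradict your own global check. Only the corrected counts agree with \eqref{monggon:eq5}, which has $2(m-\ell-1)+1$ factors from $T_{2\ell+1}$ and $2(m-\ell)$ factors from $T_{2\ell}$.

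On the off-by-one risk you flag, it helps to see that for $1\le l\le m-2$ the distinguished edge is itself a node of the next tree:
$(A_{2l+1}^{m-l},B_{2l+1}^{m-l})=(m-l,m+l+1\mid m-l+1:m+l)=A_{2l+2}^{m-l}$.
This is why $T_{2l+2}$ has $m-l$ nodes of type $A$ but only $m-l-1$ of type $B$; the theorem simply leaves that factor in pair form. With these corrections your argument coincides with the paper's proof.
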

\begin{proof}
        The joint density \(f(x_1, \dots, x_{2m})\) induced by the SL-vine tree sequence in Definition~\ref{SLdef} is given in equation~(\ref{monggon:eq4}) as
    \begin{align*}
      f(x_1, \dots, x_{2m}) &= \overbrace{\prod_{\ell=1}^{m-1} \Bigg[
    \Bigg( \prod_{j=1}^{m-\ell-1} c_{A_{2\ell+1}^j, A_{2\ell+1}^{j+1}}\Bigg) 
    \Bigg( \prod_{j=1}^{m-\ell-1} c_{A_{2\ell+1}^{j+1}, B_{2\ell+1}^j} \Bigg) 
    c_{A_{2\ell+1}^{m-\ell}, B_{2\ell+1}^{m-\ell}}\Bigg]}^{T_{2l+1}}\\ 
    &\quad\times\overbrace{\prod_{\ell=2}^{m-1} \Bigg[
    \Bigg( \prod_{j=1}^{m-\ell} c_{A_{2\ell}^j, A_{2\ell}^{j+1}}\Bigg) 
    \Bigg( \prod_{j=1}^{m-\ell} c_{A_{2\ell}^{j+1}, B_{2\ell}^j}\Bigg)\Bigg]}^{T_{2l}}\times\Bigg[\prod_{i=1}^{2m} f_i(x_i)\Bigg]\\
    &\quad\times\overbrace{\Bigg[\Bigg(\prod_{i=1}^{m-1} c_{(i,i+2)|(i+1)}\Bigg)\Bigg(\prod_{i=m}^{2m-2} c_{(2m-i+1,i+2)|(2m-i)}\Bigg)\Bigg]}^{T_{2}}\\ &\quad\times\overbrace{\Bigg[\Bigg(\prod_{i=1}^{m}c_{i,i+1}\Bigg)\Bigg(\prod_{i=m+1}^{2m-1} c_{(2m-i+1,i+1)}\Bigg)\Bigg]}^{T_1}.
  \end{align*}
      Since $A_k^i = (A_{k-1}^i, A_{k-1}^{i+1})$ and $B_k^i = (A_{k-1}^{i+1}, B_{k-1}^i)$, equation (\ref{monggon:eq5}) holds.
\end{proof}

By definition, once the first tree \(T_1\) is specified, each subsequent tree \(T_k\), for \(k = 2, \dots, 2m-1\), is uniquely determined in the SL-vine. Nevertheless, additional flexibility can be introduced in constructing \(T_2, \dots, T_{2m-1}\) while preserving the stem-and-leaf structure, by allowing alternative edge selections. For illustration, Figure \ref{monggon:fig3.2} displays all candidate edges for constructing \(T_2\) in the R-vine case with \(m = 3\), given that \(T_1\) follows the SL-vine structure. To ensure that the resulting graph remains connected and acyclic, the selection of edges must satisfy these constraints. All admissible connected trees that retain a structure similar to \(T_2\) in Figure \ref{monggon:fig3.1} are shown in Figure \ref{monggon:fig3.3}. The subsequent trees \(T_3, T_4,\) and \(T_5\) follow the same structural pattern.

\begin{figure}[htbp]
\centering
\begin{tikzpicture}[every node/.style={thick, inner sep=1pt, font=\small}, node distance=1.25 cm]

    % T2 layer
    \node (T2) at (-1,-1.75) {$T_1 \to T_2$};
    \node[ellipse, draw] (23) [right=3 cm of T2] {2,3};
    \node[ellipse, draw] (12) [left=of 23] {1,2};
    \node[ellipse, draw] (34) [right=of 23] {3,4};
    \node[ellipse, draw] (35) [above left =of 34 ] {$3,5$};
    \node[ellipse, draw] (26) [above left =of 23 ] {$2,6$};
    
    \draw (12) -- (23);
    \draw (12) -- (26);
    \draw (23) -- (34);
    \draw (23) -- (35);
    \draw (35) --  (34);
    \draw (26) --  (23);
    
\end{tikzpicture}
\caption{All possible edges in constructing \( T_2 \) for the R-vine case with \( m = 3 \)}
    \label{monggon:fig3.2}
\end{figure}

\begin{figure}[htbp]
\centering
\begin{tikzpicture}[every node/.style={thick, inner sep=1pt, font=\small}, node distance=1.25 cm]

    % T2 layer
    \node (T2) at (-2,-1.75) {$T_2$};
    \node[ellipse, draw] (23) [right=3.5 cm of T2] {2,3};
    \node[ellipse, draw] (12) [left=of 23] {1,2};
    \node[ellipse, draw] (34) [right=of 23] {3,4};
    \node[ellipse, draw] (35) [above left =of 34 ] {$3,5$};
    \node[ellipse, draw] (26) [above left =of 23 ] {$2,6$};
    
    \draw (12) -- (23);
    \draw (12) -- (26);
    \draw (23) -- (34);
    \draw (23) -- (35);

    \node (T2) at (-2,-4.25) {$T_2$};
    \node[ellipse, draw] (23) [right=3.5 cm of T2] {2,3};
    \node[ellipse, draw] (12) [left=of 23] {1,2};
    \node[ellipse, draw] (34) [right=of 23] {3,4};
    \node[ellipse, draw] (35) [above left =of 34 ] {$3,5$};
    \node[ellipse, draw] (26) [above left =of 23 ] {$2,6$};
    
    \draw (12) -- (23);
    \draw (23) -- (34);
    \draw (35) --  (34);
    \draw (26) --  (23);

    \node (T2) at (-2,-7) {$T_2$};
    \node[ellipse, draw] (23) [right=3.5 cm of T2] {2,3};
    \node[ellipse, draw] (12) [left=of 23] {1,2};
    \node[ellipse, draw] (34) [right=of 23] {3,4};
    \node[ellipse, draw] (35) [above left =of 34 ] {$3,5$};
    \node[ellipse, draw] (26) [above left =of 23 ] {$2,6$};
    
    \draw (12) -- (26);
    \draw (23) -- (34);
    \draw (23) -- (35);
    \draw (26) --  (23);

    \node (T2) at (-2,-9.75) {$T_2$};
    \node[ellipse, draw] (23) [right=3.5 cm of T2] {2,3};
    \node[ellipse, draw] (12) [left=of 23] {1,2};
    \node[ellipse, draw] (34) [right=of 23] {3,4};
    \node[ellipse, draw] (35) [above left =of 34 ] {$3,5$};
    \node[ellipse, draw] (26) [above left =of 23 ] {$2,6$};
    
    \draw (12) -- (23);
    \draw (23) -- (35);
    \draw (35) --  (34);
    \draw (26) --  (23);
\end{tikzpicture}
\caption{All possible connected trees corresponding to \( T_2 \)}
    \label{monggon:fig3.3}
\end{figure}
\subsection{Algorithm for SL-vine}\label{AlgSL}
The construction of each vine structure is governed by Steps 2 and 12 in Algorithm~\ref{alg:discmann1}. Motivated by this observation, we develop a procedure for determining \(T_1\) and an accompanying algorithm for generating \(T_d\), \(d = 2, \dots, 2m-1\), in accordance with Definition~\ref{SLdef}, thereby facilitating the application of the SL-vine to real-world data.

The construction of \(T_1\) follows the approach used for D-vines, as discussed in Remark~\ref{DissCD}. Specifically, the first \(m\) nodes are arranged into a path forming the stem \(S_1\). The remaining \(m\) nodes, corresponding to the leaves \(L_1\), are then assigned to nodes in \(S_1\) through a pairwise matching scheme, excluding one terminal node of the stem. To determine this assignment, we employ the Hungarian algorithm of \cite{kuhn1955hungarian}, which solves the underlying one-to-one matching problem by minimizing the overall assignment cost.

To generate the sequence of trees \(T_d\), \(d = 2, \dots, 2m-1\), we first develop Algorithm~\ref{alg:getproblem} in Appendix~\ref{app:AlgSL}, called GetProblem. Starting from the graph \(T_1 = S_1 \cup L_1\), which satisfies the proximity condition, the algorithm identifies a collection of structural configurations, referred to as \textit{Problems}. An illustration of the input graph is provided in Figure~\ref{monggon:fig3.2}. These configurations serve as the foundation for Algorithm~\ref{alg:slvineselection} in Appendix~\ref{app:AlgSL}, denoted by SL-vine Selection, which iteratively constructs the remaining trees in the SL-vine sequence. Within these algorithms, a \textit{point} represents a graph vertex indexed by a positive integer. An \textit{edge} is defined as a \(1 \times 2\) matrix specifying two connected points, while a \textit{triangle} denotes a \(1 \times 3\) matrix containing three pairwise adjacent points.

%%%%%%%%%%%%%%%%%%%%%%%%%%%%%%%%%
\section{\texorpdfstring{$\Delta$}{Delta}-vine copula}\label{sec4}

As noted in Section~\ref{monggon:SL}, the structure of an SL-vine becomes uniquely determined once \(T_1\) is specified, although limited flexibility can still be incorporated. This motivates the question of whether a more flexible vine structure can be developed without imposing such restrictions on \(T_1\). Figure~\ref{monggon:fig3.3} shows that, across all admissible tree configurations, each node is connected to at most three edges. Motivated by this observation, we introduce a new class of vine structures, referred to as the \(\Delta\)-vine, which generalizes the stem-and-leaf (SL) vine through a degree constraint of at most three. We begin by defining the associated \(\Delta\)-tree sequences on \(d\) elements and then establish that the resulting \(\Delta\)-vine satisfies the R-vine tree sequence conditions. We further develop a procedure for selecting the initial tree \(T_1\) and propose an algorithm for constructing the subsequent trees \(T_k\), \(k \geq 2\). The proposed \(\Delta\)-vine is designed to balance structural flexibility and model simplicity in multivariate dependence modeling. By restricting each node to degree at most three, it avoids the rigidity of the SL-vine while remaining compatible with the R-vine framework, thereby preserving the applicability of existing theoretical results and computational tools. In addition, its explicit cycle-elimination mechanism facilitates algorithmic implementation and supports automated, interpretable vine construction.
\subsection{\texorpdfstring{$\Delta$}{Delta}-vine}
As noted in Remark~\ref{rem: R-vine}, the graph induced by the connected edges of \(T_k\) may contain cycles, requiring certain edges to be removed in order to obtain the tree \(T_{k+1}\). To address this issue, we first introduce the concept of a line graph, which characterizes all admissible edges, as defined below.

\begin{definition}
For any tree $T = (V, E)$, let us denote by $L(T)$ the \textbf{line graph} of $T$, of which the vertex set is exactly the set $E$ and in which any two vertices $v, w \in E$ are adjacent whenever the corresponding edges in $T$ meet at a common vertex.
\end{definition}

\begin{example}
    Consider the vertex set \(\{1,2,3,4\}\). Figure~\ref{fig: T and L(T)} displays the tree \(T\) on the left together with its associated line graph \(L(T)\) on the right.
\end{example}
\begin{figure}[htbp]
    \centering
\begin{minipage}{0.48\textwidth}  % Left column (Trees)
    \centering
        \centering
        \begin{tikzpicture}[every node/.style={thick, inner sep=1pt, font=\small}, node distance=1.5 cm]

            % T2 layer
            \node (T2) at (-1,-1.75) {$T$};
            \node[circle, draw] (1) [right=2.25 cm of T2] {$1$};
            \node[circle, draw] (2) [left=of 1] {$2$};
            \node[circle, draw] (4) [right=of 1] {$4$};
            \node[circle, draw] (3) [above left=of 1] {$3$};

            \draw (2) -- node[above] {$1,2$} (1);
            \draw (1) -- node[above] {$1,4$} (4);
            \draw (3) -- node[above right] {$1,3$} (1);
        \end{tikzpicture}
\end{minipage}
\hfill
\begin{minipage}{0.48\textwidth}  % Right column (Pair-copulas)
    \centering
        \centering
        \begin{tikzpicture}[every node/.style={thick, inner sep=1pt, font=\small}, node distance=1.5 cm]

            % T2 layer
    \node (T2) at (-1,-1.75) {$L(T)$};
    \node[ellipse, draw] (14) [right=3 cm of T2] {$1,4$};
    \node[ellipse, draw] (12) [left=of 14] {$1,2$};
    \node[ellipse, draw] (13) [above left =of 14 ] {$1,3$};
    
    \draw (12) -- (14);
    \draw (12) -- (13);
    \draw (13) --  (14);
        \end{tikzpicture}
\end{minipage}
\caption{ $T$ and $L(T)$ graphs on $\{1,2,3,4\}$. }
    \label{fig: T and L(T)}
\end{figure}
\newpage
We now introduce the \(\Delta\)-vine tree sequence, in which every node of the \(k^{\text{th}}\) tree \(T_k\) has degree at most three. The construction starts from an initial tree satisfying this degree constraint. At each stage, a new graph is formed by treating the edges of the previous tree as vertices and connecting them according to the proximity condition. This procedure may produce 3-cycles, requiring the removal of selected edges to preserve acyclicity while maintaining connectivity. Consequently, the graph obtained at every step must remain a connected tree. The formal definition of the \(\Delta\)-vine is given below.

\begin{definition}\label{def: delta tree seq}
The sequence of trees $T_1, \dots, T_{d-1}$ is called a \emph{$\Delta$-vine} tree sequence if it satisfies:
\begin{enumerate}
    \item \(T_1=(V_1,E_1)\) is a tree with \(|V_1|=d\) vertices, and every vertex has degree at most \(3\)

    \item For each $n = 1, \dots, d-2$, there exists a sequence of connected graphs $G^n_0, G^n_1, \dots, G^n_m$, where $G^n_0 = L(T_n), G^n_m = T_{n+1}$ and $m, m-1, \dots, 1, 0$ are the numbers of 3-cycles in the graphs $G^n_0, G^n_1, \dots, G^n_{m-1}, G^n_m$, respectively. Note that $m$ is the numbers of 3-cycles in $L(T_n)$, and all $G^n_{k}$ are defined on the same vertex set. The sequence is constructed via the following process that removes one edge from each 3-cycles.

    \begin{enumerate}
        \item[2.1.] Let $\Delta$ be the set of all $3$-cycles in $G^n_0$, and set $D_0 = \emptyset$, $S_0 = \emptyset$.
        
        \item[2.2.] For each $k = 0, 1, 2, \dots, m-1$, suppose that $D_k$ is the set of $k$ $3$-cycles that already have one edge removed and $S_k$ is the set of $k$ deleted edges.
        \begin{enumerate}
            \item[2.2.1.] We shall choose a $3$-cycle $C_{k+1}$ in $\Delta-D_k$ and an edge $e_{k+1}$ in $C_{k+1}$ by considering the following two cases.
            \begin{enumerate}
                \item[i.] If there is no pair $(C, C') \in (\Delta - D_k) \times D_k$ such that $C$ and $C'$ share a common vertex, then choose any $C \in \Delta - D_k$, and select the edge in $C$ with the minimum weight. An example of a weight is $|\tau|$,

                \item[ii.] If there exists a pair $(C, C') \in (\Delta - D_k) \times D_k$ such that $C$ and $C'$ share a common vertex, namely $v$, then select an edge in $C$ using the following rules:
                \begin{itemize}
                    \item[a.] If $\deg_{G^n_k}(v) < 4$, then select the edge in $C$  with the minimum weight.

                    \item[b.] If $\deg_{G^n_k}(v) = 4$, then select the edge with the minimum weight in $C$ among those containing the vertex $v$.
                \end{itemize} 
            \end{enumerate}
            
        \item[2.2.2.] Set $D_{k+1} = D_k \cup \{C_{k+1}\}$ and $S_{k+1} = S_k \cup \{e_{k+1}\}.$
        \item[2.2.3.] Set $E(G^n_{k+1}) = E(L(T_n)) - S_{k+1}.$
        \end{enumerate}
    \end{enumerate}
\end{enumerate}
\end{definition}
\begin{remark}
    By condition~1 of Definition~\ref{def: delta tree seq}, the graph \(T_{n+1}\) must be constructed as a subgraph of \(L(T_n)\). In order for \(T_{n+1}\) to remain a tree, \(m\) edges must be eliminated from \(L(T_n)\). To formalize this iterative edge-removal process and obtain the characteristic structure of the \(\Delta\)-vine, we introduce the sequence \(\{G_i^n\}\). 
\end{remark}
\begin{remark}
    In the vine setting, the line graph \(L(G)\) cannot contain two distinct \(3\)-cycles sharing two common vertices, since this would imply the existence of a \(3\)-cycle in the original graph \(G\). Such a configuration contradicts the defining property of a vine, where \(G\) must be a tree. Figure~\ref{fig:GvsLG} illustrates this situation, with the graph \(G\) shown on the left and its corresponding line graph \(L(G)\) displayed on the right.
    \begin{figure}[htbp]
    \centering
\begin{minipage}{0.48\textwidth}  % Left column (Trees)
    \centering
        \centering
        \begin{tikzpicture}[every node/.style={thick, inner sep=1pt, font=\normalsize}, node distance=2 cm]

            % T2 layer
    \node[circle, draw] (2)  at (-1,-1.75) {$a$};
    \node[circle, draw] (1) [left=of 2] {$b$};
    \node[circle, draw] (4) [above  =of 1 ] {$d$};
    \node[circle, draw] (3) [right =of 4 ] {$c$};
    \node (G) at (-4.5,-1.75) {$G$};
    
    \draw (1) --node[below]{$e_1$} (2);
    \draw (1) --node[below left]{$e_2$} (4);
    \draw (3) --node[above]{$e_3$} (4);
    \draw (4) --node[above right ]{$e_4$} (2);
        \end{tikzpicture}
\end{minipage}
\hfill
\begin{minipage}{0.48\textwidth}  % Right column (Pair-copulas)
    \centering
        \centering
        \begin{tikzpicture}[every node/.style={thick, inner sep=1pt, font=\normalsize}, node distance=2 cm]

          % T2 layer
    \node[circle, draw] (2)  at (-1,-1.75) {$e_1$};
    \node[circle, draw] (1) [left=of 2] {$e_2$};
    \node[circle, draw] (4) [above  =of 1 ] {$e_3$};
    \node[circle, draw] (3) [right =of 4 ] {$e_4$};
    \node (LG) at (-4.5,-1.75) {$L(G)$};
    
    \draw (1) --(2);
    \draw (1) --(4);
    \draw (3) --(4);
    \draw (1) --(3);
    \draw (3) --(2);
    
        \end{tikzpicture}
\end{minipage}
\caption{ $G$ and $L(G)$ graphs on $\{a,b,c,d\}$.}
    \label{fig:GvsLG}
\end{figure}
\end{remark}

\begin{remark}\label{rem 4.4}
    The \(\Delta\)-vine tree sequence defined in Definition~\ref{def:  delta tree seq} satisfies conditions (i)--(iii) of the R-vine tree sequence in Definition~\ref{R-vine}. Condition~(i) follows directly from part~(1) of Definition~\ref{def:  delta tree seq}, while condition~(ii) is guaranteed by the cycle-elimination procedure described in step~(2.2). Finally, condition~(iii), namely the proximity condition, holds because it is satisfied by the initial graph \(G_0^n\), and the removal of edges does not violate this property.
\end{remark}
\begin{example}\label{ex: v12}
    Consider the vertex set $\{1,\dots,12\}$ with $n=1$. Figures~\ref{DelT1}--\ref{DelG5} illustrate the construction of the tree $T_2$ according to Definition~\ref{def: delta tree seq}.
\end{example}

\begin{enumerate}
    \item[1.] Let $T_1 = (V_1, E_1)$ with the node set $V_1 = \{1, \dots, 12\}$ and the edge set $E_1 = \{(k,k+1):k=1,2,\dots,5\}\cup \{\{5,7\}, \{4,12\}, \{3,8\}, \{2,9\}, \{8,10\}, \{8,11\}\}.$
\end{enumerate}
\begin{figure}[htbp]
\centering
\begin{tikzpicture}[every node/.style={thick, inner sep=1pt, font=\normalsize}, node distance=1 cm]

    % T1 layer
    \node (T1) at (-1,-1.75) {$T_1$};
    \node[circle, draw] (2) [right=3.5 cm of T1] {$2$};
    \node[circle, draw] (1) [left=of 2] {$1$};
    \node[circle, draw] (3) [right=of 2] {$3$};
    \node[circle, draw] (4) [right=of 3] {$4$};
    \node[circle, draw] (5) [right=of 4] {$5$};
    \node[circle, draw] (6) [right=of 5] {$6$};
    \node[circle, draw] (7) [above left =of 5] {$7$};
    \node[circle, draw] (12) [above left =of 4] {$12$};
    \node[circle, draw] (8) [above left =of 3 ] {$8$};
    \node[circle, draw] (10) [above left =of 8 ] {$10$};
    \node[circle, draw] (11) [above left =of 12 ] {$11$};
    \node[circle, draw] (9) [above left =of 2 ] {$9$};
    
    \draw (1) -- node[above] {$1,2$} (2);
    \draw (2) -- node[above] {$2,3$} (3);
    \draw (3) -- node[above] {$3,4$} (4);
    \draw (4) -- node[above] {$4,5$} (5);
    \draw (5) -- node[above] {$5,6$} (6);
    \draw (5) -- node[above right] {$5,7$} (7);
    \draw (12) -- node[above right] {$4,12$} (4);
    \draw (8) -- node[above right] {$3,8$} (3);
    \draw (8) -- node[below left] {$8,10$} (10);
    \draw (8) -- node[below right] {$8,11$} (11);
    \draw (9) -- node[above right] {$2,9$} (2);
    
\end{tikzpicture}
\caption{The graph $T_1=(V_1,E_1)$.}
\label{DelT1}
\end{figure}

\begin{itemize}
    \item[2.1.] $\Delta=\{C_1, C_2, C_3, C_4, C_5\}$ and $D_0 = \emptyset$, $S_0 = \emptyset.$
\end{itemize}

\begin{figure}[htbp]
\centering
\begin{tikzpicture}[every node/.style={thick, inner sep=1pt, font=\normalsize}, node distance=1 cm]

    % T1 layer
    \node (T1) at (-1,-1.75) {$G^{1}_{0}=L(T_1)$};
    \node[ellipse, draw] (23) [right=3.5 cm of T1] {$2,3$};
    \node[ellipse, draw] (12) [left=of 23] {$1,2$};
    \node[ellipse, draw] (34) [right=of 23] {$3,4$};
    \node[ellipse, draw] (45) [right=of 34] {$4,5$};
    \node[ellipse, draw] (56) [right=of 45] {$5,6$};
    \node[ellipse, draw] (57) [above left =of 56] {$5,7$};
    \node[ellipse, draw] (412) [above left =of 45] {$4,12$};
    \node[ellipse, draw] (38) [above left =of 34 ] {$3,8$};
    \node[ellipse, draw] (810) [above left =of 38 ] {$8,10$};
    \node[ellipse, draw] (811) [above left =of 412 ] {$8,11$};
    \node[ellipse, draw] (29) [above left =of 23 ] {$2,9$};
    
    \draw (12) -- node[above ] {$C_1$} (23);
    \draw (12) -- (29);
    \draw (23) -- node[above ] {$C_2$} (34);
    \draw (34) -- node[above ] {$C_3$} (45);
    \draw (45) -- node[above ] {$C_4$} (56);
    \draw (45) -- (57);
    \draw (56) -- (57);
    \draw (34) -- (412);
    \draw (45) -- (412);
    \draw (23) -- (38);
    \draw (38) --  (34);
    \draw (29) --  (23);
    \draw (810) --  node[below ] {$C_5$} (811);
    \draw (38) --  (810);
    \draw (38) --  (811);
\end{tikzpicture}
\caption{The graph $G^{1}_{0}$.}
\label{DelG0}
\end{figure}

\begin{itemize}
    \item[2.2.] For $k=0$, $D_0=\emptyset$ and $S_0=\emptyset$. \\ 
    2.2.1(i). Choose $C_1 \in \Delta - D_0$ and select the edge in $C_1$ with the minimum weight, namely $e_1$ .\\
    2.2.2. Set $D_1 = D_0 \cup \{C_1\}= \{C_1\}$ and $S_{1}=S_{0} \cup \{ e_1\}= \{e_1\}.$\\
    2.2.3. Set $E(G^1_{1}) = E(L(T_1)) - S_{1}.$
\end{itemize}

\begin{figure}[htbp]
\centering
\begin{tikzpicture}[every node/.style={thick, inner sep=1pt, font=\normalsize}, node distance=1 cm]

    % T1 layer
    \node (T1) at (-1,-1.75) {$G^{1}_{1}$};
    \node[ellipse, draw] (23) [right=3.5 cm of T1] {$2,3$};
    \node[ellipse, draw] (12) [left=of 23] {$1,2$};
    \node[ellipse, draw] (34) [right=of 23] {$3,4$};
    \node[ellipse, draw] (45) [right=of 34] {$4,5$};
    \node[ellipse, draw] (56) [right=of 45] {$5,6$};
    \node[ellipse, draw] (57) [above left =of 56] {$5,7$};
    \node[ellipse, draw] (412) [above left =of 45] {$4,12$};
    \node[ellipse, draw] (38) [above left =of 34 ] {$3,8$};
    \node[ellipse, draw] (810) [above left =of 38 ] {$8,10$};
    \node[ellipse, draw] (811) [above left =of 412 ] {$8,11$};
    \node[ellipse, draw] (29) [above left =of 23 ] {$2,9$};

    \draw (12) -- (23);
    \draw (23) -- node[above ] {$C_2$} (34);
    \draw (34) -- node[above ] {$C_3$} (45);
    \draw (45) -- node[above ] {$C_4$} (56);
    \draw (45) -- (57);
    \draw (56) -- (57);
    \draw (34) -- (412);
    \draw (45) -- (412);
    \draw (23) -- (38);
    \draw (38) --  (34);
    \draw (29) --  (23);
    \draw (810) --  node[below ] {$C_5$} (811);
    \draw (38) --  (810);
    \draw (38) --  (811);

\end{tikzpicture}
\caption{The graph $G^{1}_{1}$.}
\label{DelG1}
\end{figure}

\begin{itemize}
    \item[2.2.] For $k=1$, $D_1=\{C_1\}$ and $S_1=\{e_1\}$.\\
    2.2.1(ii(b)). Since $ (C_2, C_1) \in (\Delta - D_1) \times D_1$ and $\deg_{G^1_1}(\{2,3\}) = 4$, then select the edge with the minimum weight in $C_2$ among those containing the vertex $\{2,3\}$, namely $e_2.$\\
    2.2.2. $D_2 = D_1 \cup \{C_2\}= \{C_1, C_2\}$ and $S_{2}=S_{1} \cup \{ e_2\}= \{e_1, e_2\}.$\\
    2.2.3. Set $E(G^1_{2}) = E(L(T_1)) - S_{2}.$
\end{itemize}

\begin{figure}[htbp]
\centering
\begin{tikzpicture}[every node/.style={thick, inner sep=1pt, font=\normalsize}, node distance=1 cm]

    % T1 layer
    \node (T1) at (-1,-1.75) {$G^{1}_{2}$};
    \node[ellipse, draw] (23) [right=3.5 cm of T1] {$2,3$};
    \node[ellipse, draw] (12) [left=of 23] {$1,2$};
    \node[ellipse, draw] (34) [right=of 23] {$3,4$};
    \node[ellipse, draw] (45) [right=of 34] {$4,5$};
    \node[ellipse, draw] (56) [right=of 45] {$5,6$};
    \node[ellipse, draw] (57) [above left =of 56] {$5,7$};
    \node[ellipse, draw] (412) [above left =of 45] {$4,12$};
    \node[ellipse, draw] (38) [above left =of 34 ] {$3,8$};
    \node[ellipse, draw] (810) [above left =of 38 ] {$8,10$};
    \node[ellipse, draw] (811) [above left =of 412 ] {$8,11$};
    \node[ellipse, draw] (29) [above left =of 23 ] {$2,9$};

    \draw (12) --  (23);
    \draw (23) --  (34);
    \draw (34) -- node[above ] {$C_3$} (45);
    \draw (45) -- node[above ] {$C_4$} (56);
    \draw (45) -- (57);
    \draw (56) -- (57);
    \draw (34) -- (412);
    \draw (45) -- (412);
    \draw (38) --  (34);
    \draw (29) --  (23);
    \draw (810) --  node[below ] {$C_5$} (811);
    \draw (38) --  (810);
    \draw (38) --  (811);
    
\end{tikzpicture}
\caption{The graph $G^{1}_{2}$.}
\label{DelG2}
\end{figure}

\begin{itemize}
    \item[2.2.] For $k=2$, $D_2= \{C_1, C_2\}$ and $S_{2}= \{e_1, e_2\}.$\\ 
    2.2.1(ii(b)). Since $(C_3, C_2) \in (\Delta - D_2) \times D_2$ and $\deg_{G^1_2}(\{3,4\}) = 4$, then select the edge with the minimum weight in $C_3$ among those containing the vertex $\{3,4\}$, namely $e_3$.\\
    2.2.2. $D_3 = D_2 \cup \{C_3\}= \{C_1, C_2, C_3\}$ and $S_{3}=S_{2} \cup \{ e_3\}= \{e_1, e_2, e_3\}.$\\
    2.2.3. Set $E(G^1_{3}) = E(L(T_1)) - S_{3}.$
\end{itemize}

\begin{figure}[htbp]
\centering
\begin{tikzpicture}[every node/.style={thick, inner sep=1pt, font=\normalsize}, node distance=0.95 cm]

    % T1 layer
    \node (T1) at (-1,-1.75) {$G^{1}_{3}$};
    \node[ellipse, draw] (23) [right=3.5 cm of T1] {$2,3$};
    \node[ellipse, draw] (12) [left=of 23] {$1,2$};
    \node[ellipse, draw] (34) [right=of 23] {$3,4$};
    \node[ellipse, draw] (45) [right=of 34] {$4,5$};
    \node[ellipse, draw] (56) [right=of 45] {$5,6$};
    \node[ellipse, draw] (57) [above left =of 56] {$5,7$};
    \node[ellipse, draw] (412) [above left =of 45] {$4,12$};
    \node[ellipse, draw] (38) [above left =of 34 ] {$3,8$};
    \node[ellipse, draw] (810) [above left =of 38 ] {$8,10$};
    \node[ellipse, draw] (811) [above left =of 412 ] {$8,11$};
    \node[ellipse, draw] (29) [above left =of 23 ] {$2,9$};

    \draw (12) --  (23);
    \draw (23) -- (34);
    \draw (45) -- node[above ] {$C_4$} (56);
    \draw (45) -- (57);
    \draw (56) -- (57);
    \draw (34) -- (412);
    \draw (45) -- (412);
    \draw (38) --  (34);
    \draw (29) --  (23);
    \draw (810) --  node[below ] {{\color{black}$C_5$}} (811);
    \draw (38) --  (810);
    \draw (38) --  (811);
    
\end{tikzpicture}
\caption{The graph $G^{1}_{3}$.}
\label{DelG3}
\end{figure}

\newpage
\begin{itemize}
    \item[2.2.] For $k=3$, $D_3 =\{C_1, C_2, C_3\}$ and $S_{3}=\{e_1, e_2, e_3\}.$\\ 
    2.2.1(ii(a)). Since $(C_4, C_3) \in (\Delta - D_3) \times D_3$ and $\deg_{G^1_3}(\{4,5\})=3 < 4$, then select the edge in $C_4$ with the minimum weight, namely $e_4$.\\
    2.2.2. $D_4 = D_3 \cup \{C_4\}= \{C_1, C_2, C_3, C_4\}$ and $S_{4}=S_{3} \cup \{ e_4\}= \{e_1, e_2, e_3, e_4\}.$\\
    2.2.3. Set $E(G^1_{4}) = E(L(T_1)) - S_{4}.$
\end{itemize}

\begin{figure}[htbp]
\centering
\begin{tikzpicture}[every node/.style={thick, inner sep=1pt, font=\normalsize}, node distance=1 cm]

    % T1 layer
    \node (T1) at (-1,-1.75) {$G^{1}_{4}$};
    \node[ellipse, draw] (23) [right=3.5 cm of T1] {$2,3$};
    \node[ellipse, draw] (12) [left=of 23] {$1,2$};
    \node[ellipse, draw] (34) [right=of 23] {$3,4$};
    \node[ellipse, draw] (45) [right=of 34] {$4,5$};
    \node[ellipse, draw] (56) [right=of 45] {$5,6$};
    \node[ellipse, draw] (57) [above left =of 56] {$5,7$};
    \node[ellipse, draw] (412) [above left =of 45] {$4,12$};
    \node[ellipse, draw] (38) [above left =of 34 ] {$3,8$};
    \node[ellipse, draw] (810) [above left =of 38 ] {$8,10$};
    \node[ellipse, draw] (811) [above left =of 412 ] {$8,11$};
    \node[ellipse, draw] (29) [above left =of 23 ] {$2,9$};

    \draw (12) --  (23);
    \draw (23) --  (34);
    \draw (45) --  (56);
    \draw (45) -- (57);
    \draw (34) -- (412);
    \draw (45) -- (412);
    \draw (38) --  (34);
    \draw (29) --  (23);
    \draw (810) --  node[below ] {{\color{black}$C_5$}} (811);
    \draw (38) --  (810);
    \draw (38) --  (811);
    
\end{tikzpicture}
\caption{The graph $G^{1}_{4}$.}
\label{DelG4}
\end{figure}
\begin{itemize}
    \item[ 2.2.] For $k=4$, $D_4 = \{C_1, C_2, C_3, C_4\}$ and $S_{4}=\{e_1, e_2, e_3, e_4\}.$\\  
    2.2.1(ii(a)).  Since $ (C_5, C_2) \in (\Delta - D_4) \times D_4$ and $\deg_{G^1_4}(\{3,8\})=3 < 4$, then select the edge in $C_5$ with the minimum weight, namely $e_5$.\\
    2.2.2. $D_5 = D_4 \cup \{C_5\}= \{C_1, C_2, C_3, C_4, C_5\}$ and $S_{5}=S_{4} \cup \{ e_5\}= \{e_1, e_2, e_3, e_4, e_5\}.$\\
    2.2.3. Set $E(G^1_{5}) = E(L(T_1)) - S_{5}.$
\end{itemize}
\begin{comment}
    \begin{figure}[htbp]
\centering
\begin{tikzpicture}[every node/.style={thick, inner sep=1pt, font=\normalsize}, node distance=1.5 cm]

    % T1 layer
    \node (T1) at (-1,-1.75) {$G^{1}_{5}=T_2$};
    \node[ellipse, draw] (23) [right=3.5 cm of T1] {$2,3$};
    \node[ellipse, draw] (12) [left=of 23] {$1,2$};
    \node[ellipse, draw] (34) [right=of 23] {$3,4$};
    \node[ellipse, draw] (45) [right=of 34] {$4,5$};
    \node[ellipse, draw] (56) [right=of 45] {$5,6$};
    \node[ellipse, draw] (57) [above left =of 56] {$5,7$};
    \node[ellipse, draw] (412) [above left =of 45] {$4,12$};
    \node[ellipse, draw] (38) [above left =of 34 ] {$3,8$};
    \node[ellipse, draw] (810) [above left =of 38 ] {$8,10$};
    \node[ellipse, draw] (811) [above left =of 412 ] {$8,11$};
    \node[ellipse, draw] (29) [above left =of 23 ] {$2,9$};

    \draw (12) -- node[above ] {$C_1$} (23);
    \draw [red, thick] (12) -- (29);
    \draw (23) -- node[above ] {$C_2$} (34);
    \draw [teal, thick] (34) -- node[above ] {{\color{black}$C_3$}} (45);
    \draw (45) -- node[above ] {$C_4$} (56);
    \draw (45) -- (57);
    \draw [violet, thick] (56) -- (57);
    \draw (34) -- (412);
    \draw (45) -- (412);
    \draw [blue, thick] (23) -- (38);
    \draw (38) --  (34);
    \draw (29) --  (23);
    \draw [orange, thick] (810) --  node[below ] {{\color{black}$C_5$}} (811);
    \draw (38) --  (810);
    \draw (38) --  (811);
\end{tikzpicture}
\end{figure}
\end{comment}

\begin{figure}[htbp]
\centering
\begin{tikzpicture}[every node/.style={thick, inner sep=1pt, font=\normalsize}, node distance=1 cm]
    
   % T1 layer
    \node (T1) at (-1,-1.75) {$G^{1}_{5}=T_2$};
    \node[ellipse, draw] (23) [right=3.5 cm of T1] {$2,3$};
    \node[ellipse, draw] (12) [left=of 23] {$1,2$};
    \node[ellipse, draw] (34) [right=of 23] {$3,4$};
    \node[ellipse, draw] (45) [right=of 34] {$4,5$};
    \node[ellipse, draw] (56) [right=of 45] {$5,6$};
    \node[ellipse, draw] (57) [above left =of 56] {$5,7$};
    \node[ellipse, draw] (412) [above left =of 45] {$4,12$};
    \node[ellipse, draw] (38) [above left =of 34 ] {$3,8$};
    \node[ellipse, draw] (810) [above left =of 38 ] {$8,10$};
    \node[ellipse, draw] (811) [above left =of 412 ] {$8,11$};
    \node[ellipse, draw] (29) [above left =of 23 ] {$2,9$};
    
    \draw (12) -- (23);
    \draw (23) -- (34);
    \draw (45) -- (56);
    \draw (45) -- (57);
    \draw (34) -- (412);
    \draw (45) -- (412);
    \draw (38) --  (34);
    \draw (29) --  (23);
    \draw (38) --  (810);
    \draw (38) --  (811);
    
\end{tikzpicture}
\caption{The graph $G^{1}_{5}$.}
\label{DelG5}
\end{figure}

For \(n = 2, \ldots, 11\), Definition~\ref{def:  delta tree seq} can be applied analogously to Example~\ref{ex: v12}. Moreover, by Remark~\ref{rem 4.4}, every \(\Delta\)-vine tree sequence forms an R-vine tree sequence.
\subsection{Algorithm for $\Delta$-Vine}\label{Algdet}
As in Section~\ref{AlgSL}, Steps~2 and~12 of Di{\ss}mann’s Algorithm must be adapted. To apply the \(\Delta\)-vine sequence in Definition~\ref{def:  delta tree seq} to real datasets, we follow the tree-selection procedure of Step~2 in Di{\ss}mann’s Algorithm while imposing the additional restriction that the degree of every node is at most three. Specifically, we define
\begin{align*}\label{spanning}
    T_1 = \underset{
\substack{
T=(N,E)\ \text{spanning tree} \\
\deg(v) \leq 3\ \forall v \in N
}
}{\arg\max} \sum_{e = (a_e, b_e) \in E} \tau_{a_e, b_e}.
\end{align*}The construction of $T_{2}, \dots, T_{d-1}$ is presented in Algorithm~\ref{GetToDelete} in Appendix \ref{app:AlgDel}.

%%%%%%%%%%%%%%%%%%%%%%%%%%
\section{Simulating from SL-vine and $\Delta$-vine copulas}\label{sec5}
This section presents examples illustrating the use of the \texttt{VineCopula} package, focusing on simulations from the SL-vine and \(\Delta\)-vine copulas.

The function \texttt{RVineSim} from the R package \texttt{VineCopula} can be used to generate samples from a specified R-vine copula model. Since both the SL-vine and the \(\Delta\)-vine are subclasses of the R-vine, this function can also be applied directly to simulate observations from SL-vine and \(\Delta\)-vine copulas. For the 8-dimensional case, we represent the SL-vine and \(\Delta\)-vine tree structures by the following lower triangular matrices \(M_{\text{SL}}\) and \(M_{\Delta}\), respectively. That is,
\begin{center}
    $M_{\text{SL}}=\begin{pmatrix}
6 & 0 & 0 & 0 & 0 & 0 & 0 & 0 \\
3 & 7 & 0 & 0 & 0 & 0 & 0 & 0 \\
7 & 3 & 2 & 0 & 0 & 0 & 0 & 0 \\
5 & 5 & 3 & 3 & 0 & 0 & 0 & 0 \\
8 & 8 & 5 & 5 & 1 & 0 & 0 & 0 \\
4 & 4 & 8 & 8 & 5 & 5 & 0 & 0 \\
1 & 1 & 4 & 4 & 8 & 8 & 4 & 0 \\
2 & 2 & 1 & 1 & 4 & 4 & 8 & 8
\end{pmatrix}$ and 
$M_{\Delta}=\begin{pmatrix}
5 & 0 & 0 & 0 & 0 & 0 & 0 & 0 \\
3 & 6 & 0 & 0 & 0 & 0 & 0 & 0 \\
8 & 3 & 4 & 0 & 0 & 0 & 0 & 0 \\
7 & 8 & 3 & 1 & 0 & 0 & 0 & 0 \\
2 & 7 & 8 & 3 & 3 & 0 & 0 & 0 \\
1 & 2 & 7 & 8 & 8 & 2 & 0 & 0 \\
6 & 1 & 2 & 7 & 2 & 8 & 7 & 0 \\
4 & 4 & 1 & 2 & 7 & 7 & 8 & 8
\end{pmatrix}$.
\end{center}

As pair-copula families, we selected
\begin{center}
$\text{fam}_{\text{SL}}=\begin{pmatrix}
0 & 0 & 0 & 0 & 0 & 0 & 0 & 0 \\
5 & 0 & 0 & 0 & 0 & 0 & 0 & 0 \\
5 & 2 & 0 & 0 & 0 & 0 & 0 & 0 \\
20 & 5 & 114 & 0 & 0 & 0 & 0 & 0 \\
2 & 134 & 1 & 30 & 0 & 0 & 0 & 0 \\
5 & 14 & 7 & 7 & 2 & 0 & 0 & 0 \\
1 & 17 & 134 & 1 & 2 & 2 & 0 & 0 \\
14 & 14 & 214 & 114 & 114 & 14 & 18 & 0
\end{pmatrix}$ and
    $\text{fam}_{\Delta}=\begin{pmatrix}
0 & 0 & 0 & 0 & 0 & 0 & 0 & 0 \\
134 & 0 & 0 & 0 & 0 & 0 & 0 & 0 \\
40 & 20 & 0 & 0 & 0 & 0 & 0 & 0 \\
5 & 104 & 5 & 0 & 0 & 0 & 0 & 0 \\
30 & 134 & 30 & 0 & 0 & 0 & 0 & 0 \\
114 & 33 & 2 & 5 & 13 & 0 & 0 & 0 \\
224 & 20 & 2 & 2 & 204 & 40 & 0 & 0 \\
214 & 214 & 114 & 14 & 14 & 214 & 214 & 0
\end{pmatrix},$
\end{center}

where 0 corresponds to independence, 1 to a Gaussian, 2 to a Student t, 5 to a Frank, 7 to a BB1, 13 to a survival Clayton, 14 to a survival Gumbel, 17 to a survival BB1, 18 to a survival BB6, 20 to a survival BB8, 30 to a rotated $90^\circ$ BB8, 33 to a rotated $270^\circ$ Clayton, 40 to a rotated $270^\circ$ BB8, 104 to a Tawn type 1, 114 to a rotated $180^\circ$ Tawn type 1, 134 to a rotated $270^\circ$ Tawn type 1, 214 to a rotated $180^\circ$ Tawn type 2, 224 to a rotated $90^\circ$ Tawn type 2. The corresponding copula parameters were specified as

\begin{center}
\tiny   
$\theta_{\text{SL}}=\begin{pmatrix}
0 & 0 & 0 & 0 & 0 & 0 & 0 & 0 \\
0.41 & 0 & 0 & 0 & 0 & 0 & 0 & 0 \\
-0.62 & (-0.09, 15.51) & 0 & 0 & 0 & 0 & 0 & 0 \\
(1.10, 0.99) & 0.89 & (1.51,0.01) & 0 & 0 & 0 & 0 & 0 \\
(0.19,22.11) & (-1.25,0.06) & 0.08 & (-1.35,-0.77) & 0 & 0 & 0 & 0 \\
-1.84 & 1.10 & (0.09,1.14) & (0.06,1.11) & (0.68,5.78) & 0 & 0 & 0 \\
0.38 & (0.27,1.24) & (-1.30,0.18) & -0.55 & (0.88,4.11) & (0.47,11.29) & 0 & 0 \\
2.46 & 6.42 & (5.72,0.97) & (1.81,0.32) & (5.91,0.98) & 3.64 & (1.60,2.16) & 0
\end{pmatrix}$
\end{center} and
\begin{center}
\tiny
$\theta_{\Delta}=\begin{pmatrix}
0 & 0 & 0 & 0 & 0 & 0 & 0 & 0 \\
(-1.14,0.09) & 0 & 0 & 0 & 0 & 0 & 0 & 0 \\
(-1.80,-0.89) & (1.15,0.98) & 0 & 0 & 0 & 0 & 0 & 0 \\
-5.65 & (1.12,0.16) & 1.22 & 0 & 0 & 0 & 0 & 0 \\
(-1.68,-0.69) & (-1.38,0.34) & (-1.35,-0.87) & 0 & 0 & 0 & 0 & 0 \\
(1.33,0.16) & -0.09 & (0.53,24.91) & -1.33 & 0.05 & 0 & 0 & 0 \\
(-1.64,0.22) & (1.18,0.94) & (0.39,6.27) & (0.18,8.79) &
(1.23,0.16) & (-1.50,-0.95) & 0 & 0 \\
(5.91,0.98) & (5.09,0.99) & (5.76,0.98) & 6.42 &
2.83 & (4.57,0.95) & (1.92,0.44) & 0
\end{pmatrix}.$
\end{center}

The corresponding SL-vine and \(\Delta\)-vine tree sequences are displayed in Figure~\ref{monggon:SvsDelta}. Samples of size \(n=1307\) were then generated from the SL-vine and \(\Delta\)-vine copulas using the function \texttt{RVineSim} in the \texttt{VineCopula} package with inputs \(M\), \texttt{fam}, and \(\theta\). The resulting pair plots and normalized contour plots are shown in Figures~\ref{SLPlot} and~\ref{DeltaPlot}, respectively.

To evaluate the performance of the proposed vine copula models, namely the SL-vine and the \(\Delta\)-vine, relative to the conventional vine copula, we employ the SL-vine algorithm from Section~\ref{AlgSL} and the \(\Delta\)-vine algorithm from Section~\ref{Algdet}. These procedures generate the tree sequences \(T_1, \dots, T_{d-1}\), which determine the corresponding copula density structures. Model performance is then assessed using the Akaike Information Criterion (AIC) and Bayesian Information Criterion (BIC). The comparison is carried out on the abalone, capital-market, and Wisconsin breast cancer datasets.
\begin{figure}[htbp]
\centering
\begin{minipage}{0.48\textwidth}
\begin{tikzpicture}[every node/.style={thick, inner sep=1pt, font=\scriptsize}, node distance=0.35cm]

    % T1 layer
    \node (T1) at (-2,0) {$T_1$};
    
    \node[ellipse, draw] (2) [right=2 cm of T1] {$2$};
    \node[ellipse, draw] (1) [left=of 2] {$1$};
    \node[ellipse, draw] (4) [right=of 2] {$4$};

    \node[ellipse, draw] (5) [right=of 4] {$5$};
    \node[ellipse, draw] (6) [right=of 5] {$6$};
    \node[ellipse, draw] (7) [above left =of 5] {$7$};
    \node[ellipse, draw] (8) [above left =of 4] {$8$};
    \node[ellipse, draw] (3) [above left =of 2] {$3$};

    \draw (1) -- (2);
    \draw (2) -- (4);
    \draw (4) -- (5);
    \draw (5) -- (6);
    \draw (5) -- (7);
    \draw (8) --  (4);
    \draw (3) --  (2);
    
    % T2 layer
    
    \node (T2) at (-2,-2) {$T_2$};
    \node[ellipse, draw] (2) [right=2.4 cm of T2] {$2,4$};
    \node[ellipse, draw] (1) [left=of 2] {$1,2$};
    \node[ellipse, draw] (4) [right=of 2] {$4,5$};
    \node[ellipse, draw] (5) [right=of 4] {$5,7$};
    \node[ellipse, draw] (7) [above left =of 5] {$5,6$};
    \node[ellipse, draw] (8) [above left =of 4] {$4,8$};
    \node[ellipse, draw] (3) [above left =of 2] {$2,3$};

    \draw (1) -- (2);
    \draw (2) -- (4);
    \draw (4) -- (5);
    \draw (5) -- (7);
    \draw (8) --  (4);
    \draw (3) --  (2);

     % T3 layer
    \node (T3) at (-2,-4) {$T_3$};
    \node[ellipse, draw] (2) [right=2.25 cm of T3] {$2,5|4$};
    \node[ellipse, draw] (1) [left=of 2] {$1,4|2$};
    \node[ellipse, draw] (4) [right=of 2] {$4,7|5$};
    \node[ellipse, draw] (5) [right=of 4] {$5,8|4$};
    \node[ellipse, draw] (8) [above left =of 4] {$6,7|5$};
    \node[ellipse, draw] (3) [above left =of 2] {$3,4|2$};

    \draw (1) -- (2);
    \draw (2) -- (4);
    \draw (4) -- (5);
    \draw (8) --  (4);
    \draw (3) --  (2);

    % T4 layer
    \node (T4) at (-2,-6) {$T_4$};
    \node[ellipse, draw] (2) [right=2.8 cm of T4] {$2,7|5,4$};
    \node[ellipse, draw] (1) [left=of 2] {$1,5|4,2$};
    \node[ellipse, draw] (4) [right=of 2] {$4,6|7,5$};
    \node[ellipse, draw] (8) [above left =of 4] {$7,8|4,5$};
    \node[ellipse, draw] (3) [above left =of 2] {$3,5|4,2$};

    \draw (1) -- (2);
    \draw (2) -- (4);
    \draw (8) --  (4);
    \draw (3) --  (2);

    % T5 layer
    \node (T5) at (-2,-8) {$T_5$};
    \node[ellipse, draw] (2) [right=2.75 cm of T5] {$2,6|7,5,4$};
    \node[ellipse, draw] (1) [left=of 2] {$1,7|5,4,2$};
    \node[ellipse, draw] (4) [right=of 2] {$8,6|7,4,5$};
    \node[ellipse, draw] (3) [above left =of 2] {$3,7|5,4,2$};

    \draw (1) -- (2);
    \draw (2) -- (4);
    \draw (3) --  (2);

    % T6 layer
    \node (T6) at (-2,-10) {$T_6$};
    \node[ellipse, draw] (2) [right=3.25 cm of T6] {$1,6|7,5,4,2$};
    \node[ellipse, draw] (1) [left=of 2] {$2,8|6,7,5,4$};
    \node[ellipse, draw] (4) [above left=of 2] {$3,6|7,5,4,2$};

    \draw (1) -- (2);
    \draw (2) -- (4);

    % T7 layer
    \node (T7) at (-2,-11) {$T_7$};
    \node[ellipse, draw] (2) [right=3.25 cm of T7] {$1,8|6,7,5,4,2$};
    \node[ellipse, draw] (1) [left=of 2] {$1,3|6,7,5,4,2$};

    \draw (1) -- (2);
\end{tikzpicture}
\end{minipage}
\hfill
\begin{minipage}{0.48\textwidth}
\centering
\begin{tikzpicture}[every node/.style={thick, inner sep=1pt, font=\scriptsize}, node distance=0.35cm]

    % T1 layer
    \node (T1) at (-2,0) {$T_1$};
    
    \node[ellipse, draw] (2) [right=2 cm of T1] {$4$};
    \node[ellipse, draw] (1) [left=of 2] {$5$};
    \node[ellipse, draw] (4) [right=of 2] {$1$};

    \node[ellipse, draw] (5) [right=of 4] {$2$};
    \node[ellipse, draw] (6) [right=of 5] {$7$};
    \node[ellipse, draw] (7) [right=of 6] {$3$};
    \node[ellipse, draw] (3) [above left =of 2] {$6$};
    \node[ellipse, draw] (8) [above left =of 6] {$8$};

    \draw (1) -- (2);
    \draw (2) -- (4);
    \draw (4) -- (5);
    \draw (5) -- (6);
    \draw (3) --  (2);
    \draw (6) --  (8);
    \draw (6) --  (7);
    
    % T2 layer
    
    \node (T2) at (-2,-2) {$T_2$};
    \node[ellipse, draw] (2) [right=1.5 cm of T2] {$4,6$};
    \node[ellipse, draw] (1) [left=of 2] {$4,5$};
    \node[ellipse, draw] (4) [above=0.25 cm of 2] {$1,4$};

    \node[ellipse, draw] (5) [right=of 4] {$1,2$};
    \node[ellipse, draw] (6) [right=of 5] {$2,7$};
    \node[ellipse, draw] (7) [right=of 6] {$3,7$};
    \node[ellipse, draw] (8) [below =0.25 cm of 6] {$7,8$};

    \draw (1) -- (2);
    \draw (2) -- (4);
    \draw (4) -- (5);
    \draw (5) -- (6);
    \draw (6) --  (8);
    \draw (6) --  (7);

     % T3 layer
    
    \node (T2) at (-2,-4) {$T_3$};
    \node[ellipse, draw] (2) [right=1.85 cm of T3] {$1,6|4$};
    \node[ellipse, draw] (1) [left=of 2] {$5,6|4$};
    \node[ellipse, draw] (4) [above=0.25 cm of 2] {$2,4|1$};

    \node[ellipse, draw] (5) [right=of 4] {$1,7|2$};
    \node[ellipse, draw] (6) [right=of 5] {$2,8|7$};
    \node[ellipse, draw] (7) [below =0.25 cm of 6] {$2,3|7$};
    
    \draw (1) -- (2);
    \draw (2) -- (4);
    \draw (4) -- (5);
    \draw (5) -- (6);
    \draw (6) --  (7);

    % T4 layer
    \node (T4) at (-2,-6) {$T_4$};
    \node[ellipse, draw] (2) [right=2.25 cm of T4] {$2,6|1,4$};
    \node[ellipse, draw] (1) [left=of 2] {$1,5|4,6$};
    \node[ellipse, draw] (4) [above=0.25 cm of 2] {$4,7|1,2$};

    \node[ellipse, draw] (5) [right=of 4] {$1,8|2,7$};
    \node[ellipse, draw] (6) [below =0.25 cm of 5] {$3,8|2,7$};
    
    \draw (1) -- (2);
    \draw (2) -- (4);
    \draw (4) -- (5);
    \draw (5) -- (6);

    % T5 layer
    \node (T5) at (-2,-8) {$T_5$};
    \node[ellipse, draw] (2) [right=3 cm of T5] {$6,7|1,2,4$};
    \node[ellipse, draw] (1) [left=of 2] {$2,5|1,4,6$};
    \node[ellipse, draw] (4) [above=0.25 cm of 2] {$4,8|1,2,7$};

    \node[ellipse, draw] (5) [left=of 4] {$1,3|2,7,8$};
    
    \draw (1) -- (2);
    \draw (2) -- (4);
    \draw (4) -- (5);

    % T6 layer
    \node (T6) at (-2,-10) {$T_6$};
    \node[ellipse, draw] (2) [right=3.25 cm of T6] {$6,8|1,2,4,7$};
    \node[ellipse, draw] (1) [left=of 2] {$5,7|1,2,4,6$};
    \node[ellipse, draw] (4) [above left=of 2] {$3,4|1,2,7,8$};

    \draw (1) -- (2);
    \draw (2) -- (4);

    % T7 layer
    \node (T7) at (-2,-11) {$T_7$};
    \node[ellipse, draw] (2) [right=3.25 cm of T7] {$3,6|1,2,4,7,8$};
    \node[ellipse, draw] (1) [left=of 2] {$5,8|1,2,4,6,7$};

    \draw (1) -- (2);
\end{tikzpicture}
\end{minipage}
\caption{Tree structures of the 8-dimensional SL-vine (left) and \(\Delta\)-vine (right).}
    \label{monggon:SvsDelta}
\end{figure} 

\begin{figure}
        \centering
		\includegraphics[scale=0.25]{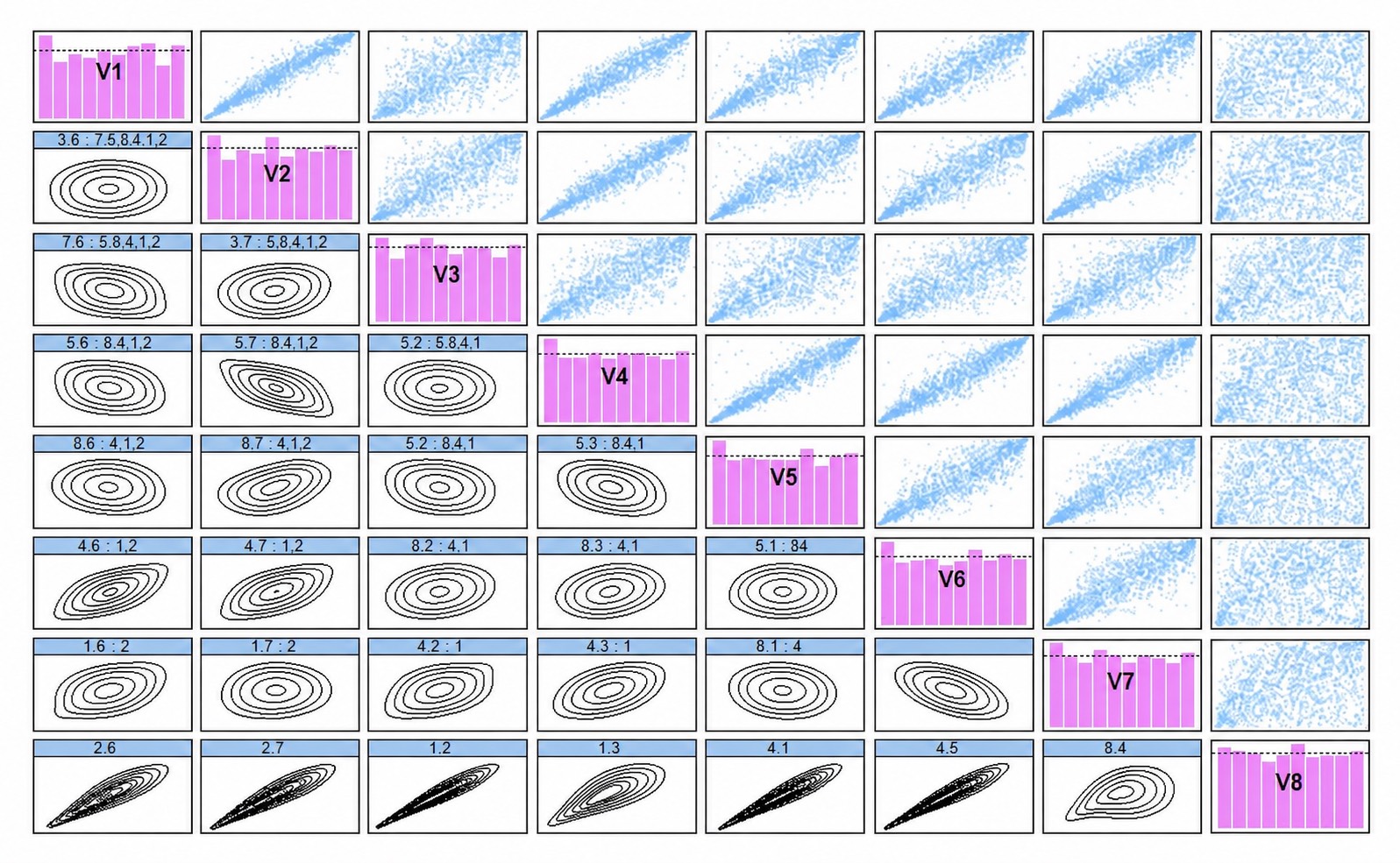}
	\caption{ Pairwise scatter plots (upper triangle), marginal histograms (diagonal), and pairwise normalized contour plots (lower triangle) based on 1307 simulated observations from the SL-vine copula.}
	\label{SLPlot}
\end{figure}

\begin{figure}
        \centering
		\includegraphics[scale=0.24]{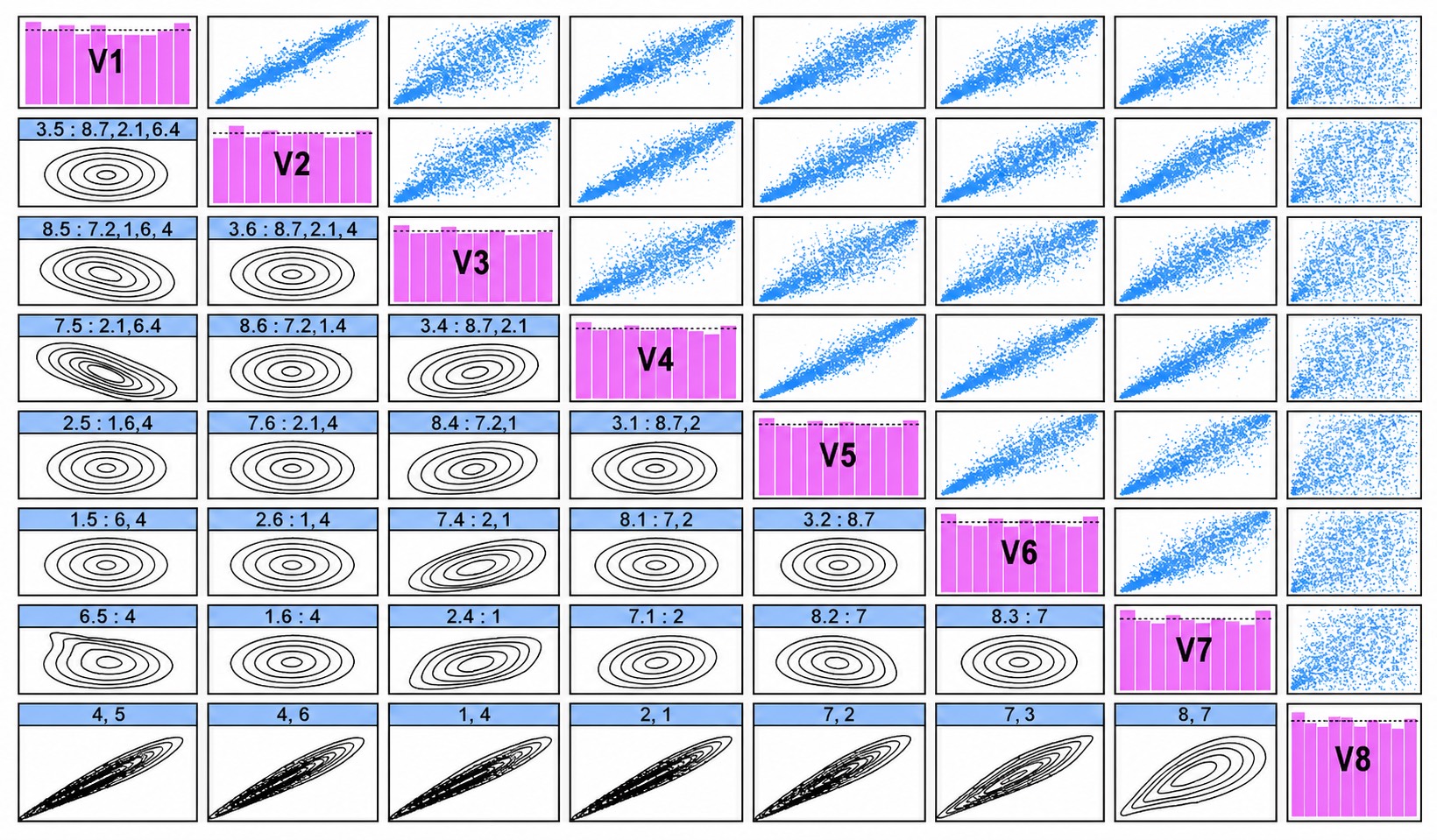}
	\caption{ Pairwise scatter plots (upper triangle), marginal histograms (diagonal), and pairwise normalized contour plots (lower triangle) based on 1307 simulated observations from the \(\Delta\)-vine copula.}
	\label{DeltaPlot}
\end{figure}
\newpage

\section{Data and result}\label{sec6}
To demonstrate the practical application of the SL-vine and \(\Delta\)-vine copula models, we consider three benchmark datasets: the Abalone, Capital Market, and Breast Cancer Wisconsin datasets. The proposed models are evaluated against existing vine copula models using the log-likelihood, Akaike Information Criterion (AIC), and Bayesian Information Criterion (BIC). Section~\ref{SecAba} examines the Abalone dataset and provides an example in which the SL-vine achieves the best overall performance. Section~\ref{SecThai} highlights the impact of the model selection criterion on the final choice of model; the Capital Market dataset illustrates that the log-likelihood, AIC, and BIC may favor different vine structures. Finally, Section~\ref{SecBCW} analyzes the Breast Cancer Wisconsin dataset, where the \(\Delta\)-vine and R-vine yield identical fits and are consistently selected as the preferred models under all three criteria. All analyses were conducted in \texttt{R} using the \texttt{VineCopula} and \texttt{rvinecopulib} packages developed by \cite{nagler2024solving}.
\subsection{Abalone dataset}\label{SecAba}
We first consider the Abalone dataset from the University of California--Irvine Machine Learning Repository (\url{https://doi.org/10.24432/C55C7W}). Following a common preprocessing step, we restrict attention to female abalones, yielding a sample of 1,307 observations. The variables analyzed are length (len), diameter (dia), height (hei), whole weight (whole), shucked weight (shuck), viscera weight (vis), shell weight (shell), and rings. Here, length, diameter, and height measure shell size (mm), whereas whole weight, shucked weight, viscera weight, and shell weight represent weight-related characteristics (g). The variable rings is used as an indicator of age, with age estimated by adding 1.5 to the ring count. Table~\ref{monggon:tab0} reports summary statistics for the selected variables.

\begin{table}[htbp]
	\centering
	\caption{Summary statistics of selected variables in the abalone dataset.}
	\label{monggon:tab0}
	\footnotesize
	\begin{tabular}{l|ccccccc}
		\hline
		Variable & Mean & SD & Median & Min & Max & Skew & Kurtosis \\
		\hline
		Length (1)        & 0.58  & 0.09 & 0.59  & 0.28 & 0.81 & -0.53 & 0.10 \\
		Diameter (2)      & 0.45  & 0.07 & 0.47  & 0.20 & 0.65 & -0.51 & 0.15 \\
		Height (3)        & 0.16  & 0.04 & 0.16  & 0.01 & 1.13 & 10.90 & 265.19 \\
		Whole weight (4)  & 1.05  & 0.43 & 1.04  & 0.08 & 2.66 & 0.37 & 0.05 \\
		Shucked weight (5) & 0.45  & 0.20 & 0.44  & 0.03 & 1.49 & 0.55 & 0.65 \\
		Viscera weight (6) & 0.23  & 0.10 & 0.22  & 0.02 & 0.59 & 0.39 & -0.10 \\
		Shell weight (7)  & 0.30  & 0.13 & 0.30  & 0.03 & 0.98 & 0.69 & 1.40 \\
		Rings (8)         & 11.13 & 3.10 & 10.00 & 5.00 & 29.00 & 1.47 & 3.10 \\
		\hline
	\end{tabular}
\end{table}
Pseudo-observations were obtained nonparametrically using the empirical cumulative distribution function. Based on these pseudo-observations, the SL-vine and $\Delta$-vine copula models were fitted using Di{\ss}mann’s algorithm together with the selection procedures described in Sections~\ref{AlgSL} and~\ref{Algdet}, respectively. The resulting tree structures are shown in Figure \ref{monggon:SvsDelta}, while the selected pair-copula families and corresponding parameter estimates are reported in Tables~\ref{monggon:tab1} and~\ref{monggon:tab1del}.

\begin{table}[htbp]
	\centering
	\caption{The SL-vine-based results present sequential parameter estimation, chosen copula structures, and corresponding dependence measures for eight variables in the female abalone data.}
	\label{monggon:tab1}
	\footnotesize
	\begin{tabular}{l|llll}
		\hline
		\textbf{Tree} & \textbf{Edge\textsuperscript{a}} & \textbf{Family} & \textbf{Parameter(s)} & \textbf{Kendall's $\tau$} \\
		\hline
		1 & $(2,3)$       & Survival Gumbel        & $2.46$        & $0.59$ \\
		1 & $(2,1)$       & Survival Gumbel        & $6.42$        & $0.84$ \\
		1 & $(4,2)$       & Tawn type 2 $(180^\circ)$ & $(5.72, 0.97)$ & $0.81$ \\
		1 & $(4,8)$       & Tawn type 1 $(180^\circ)$   & $(1.81, 0.32)$  & $0.20$ \\
		1 & $(5,4)$       & Tawn type 1 $(180^\circ)$   & $(5.91, 0.98)$  & $0.82$ \\
		1 & $(5,6)$       & Survival Gumbel        & $3.64$          & $0.73$ \\
		1 & $(7,5)$       & Survival BB6            & $(1.60, 2.16)$  & $0.65$ \\
		\hline
		2 & $(4,3\,\mid\,2)$ & Gaussian         & $0.38$  & $0.25$ \\
		2 & $(4,1\,\mid\,2)$ & Survival BB1      & $(0.27, 1.24)$  & $0.29$ \\
		2 & $(5,2\,\mid\,4)$ & Tawn type 1 $(270^\circ)$   & $(-1.30, 0.18)$  & $-0.08$ \\
		2 & $(5,8\,\mid\,4)$ & Gaussian         & $-0.55$ & $-0.37$ \\
		2 & $(7,4\,\mid\,5)$ & Student's $t$         & $(0.88, 4.11)$  & $0.68$ \\
		2 & $(7,6\,\mid\,5)$ & Student's $t$         & $(0.47, 11.29)$ & $0.31$ \\
		\hline
		3 & $(5,3\,\mid\,4,2)$ & Frank       & $-1.84$   & $-0.20$ \\
		3 & $(5,1\,\mid\,4,2)$ & Survival Gumbel      & $1.10$    & $0.09$ \\
		3 & $(7,2\,\mid\,5,4)$ & BB1     & $(0.09, 1.14)$  & $0.16$ \\
		3 & $(7,8\,\mid\,5,4)$ & BB1     & $(0.06, 1.11)$  & $0.13$ \\
		3 & $(6,4\,\mid\,7,5)$ & Student's $t$       & $(0.68, 5.78)$  & $0.48$ \\
		\hline
		4 & $(7,3\,\mid\,5,4,2)$ & Student's $t$     & $(0.19, 22.11)$ & $0.12$ \\
		4 & $(7,1\,\mid\,5,4,2)$ & Tawn type 1 $(270^\circ)$ & $(-1.25, 0.06)$ & $-0.03$ \\
		4 & $(6,2\,\mid\,7,5,4)$ & Gaussian     & $0.08$  & $0.05$ \\
		4 & $(6,8\,\mid\,7,5,4)$ & BB8 $(90^\circ)$ & $(-1.35, -0.77)$ & $-0.07$ \\
		\hline
		5 & $(6,3\,\mid\,7,5,4,2)$ & Survival BB8 & $(1.10, 0.99)$  & $0.05$ \\
		5 & $(6,1\,\mid\,7,5,4,2)$ & Frank    & $0.89$ & $0.10$ \\
		5 & $(8,2\,\mid\,6,7,5,4)$ & Tawn type 1 $(180^\circ)$ & $(1.51, 0.01)$ & $0.01$ \\
		\hline
		6 & $(1,3\,\mid\,6,7,5,4,2)$ & Frank  & $-0.62$   & $-0.07$ \\
		6 & $(8,1\,\mid\,6,7,5,4,2)$ & Student's $t$  & $(-0.09, 15.51)$ & $-0.06$ \\
		\hline
		7 & $(8,3\,\mid\,1,6,7,5,4,2)$ & Frank & $(0.41,0.00)$ & $0.05$ \\
		\hline
	\end{tabular}
\end{table}

\begin{table}[htbp]
\centering
\caption{The $\Delta$-vine-based results present sequential parameter estimation, chosen copula structures, and corresponding dependence measures for eight variables in the female abalone data.}

\label{monggon:tab1del}
\footnotesize
\begin{tabular}{l|llll}
\hline
\textbf{Tree} & \textbf{Edge\textsuperscript{a}} & \textbf{Family} & \textbf{Parameter(s)} & \textbf{Kendall's $\tau$} \\
\hline
1 & $(4,5)$ & Tawn type 2 $(180^\circ)$ & (5.91, 0.98) & 0.82 \\
1 & $(4,6)$ & Tawn type 2 $(180^\circ)$ & (5.09, 0.99) & 0.80 \\
1 & $(1,4)$ & Tawn type 1 $(180^\circ)$ & (5.76, 0.98) & 0.81 \\
1 & $(2,1)$ & Survival Gumbel & 6.42 & 0.84 \\
1 & $(7,3)$ & Survival Gumbel & 2.83 & 0.65 \\
1 & $(7,2)$ & Tawn type 2 $(180^\circ)$ & (4.57, 0.95) & 0.75 \\
1 & $(8,7)$ & Tawn type 2 $(180^\circ)$ & (1.92, 0.44) & 0.27 \\
\hline
2 & $(6,5\,\mid\,4)$ & Tawn type 2 $(90^\circ)$ & (-1.64, 0.22) & -0.14 \\
2 & $(1,6\,\mid\,4)$ & Survival BB8 & (1.18, 0.94) & 0.06 \\
2 & $(2,4\,\mid\,1)$ & Student’s $t$ & (0.39, 6.27) & 0.25 \\
2 & $(7,1\,\mid\,2)$ & Student’s $t$ & (0.18, 8.79) & 0.11 \\
2 & $(2,3\,\mid\,7)$ & Tawn type 2 & (1.23, 0.16) & 0.06 \\
2 & $(8,2\,\mid\,7)$ & BB8 $(270^\circ)$ & (-1.50, -0.95) & -0.17 \\
\hline
3 & $(1,5\,\mid\,6,4)$ & Tawn type 1 $(180^\circ)$ & (1.33, 0.16) & 0.08 \\
3 & $(2,6\,\mid\,1,4)$ & Clayton $(270^\circ)$ & -0.09 & -0.04 \\
3 & $(7,4\,\mid\,2,1)$ & Student’s $t$ & (0.53, 24.91) & 0.36 \\
3 & $(8,1\,\mid\,7,2)$ & Frank & -1.33 & -0.14 \\
3 & $(8,3\,\mid\,2,7)$ & Survival Clayton & 0.05 & 0.03 \\
\hline
4 & $(2,5\,\mid\,1,6,4)$ & BB8 $(90^\circ)$ & (-1.68, -0.69) & -0.10 \\
4 & $(7,6\,\mid\,2,1,4)$ & Tawn type 1 $(270^\circ)$ & (-1.38, 0.34) & -0.14 \\
4 & $(8,4\,\mid\,7,2,1)$ & BB8 $(90^\circ)$ & (-1.35, -0.87) & 0.10 \\
4 & $(3,1\,\mid\,8,7,2)$ & Independent & - & 0.00 \\
\hline
5 & $(7,5\,\mid\,2,1,6,4)$ & Frank & -5.65 & -0.49 \\
5 & $(3,6\,\mid\,7,2,1,4)$ & Tawn type 1 & (-1.12, 0.16) & 0.04 \\
5 & $(3,4\,\mid\,8,7,2,1)$ & Frank & 1.22 & 0.13 \\
\hline
6 & $(8,5\,\mid\,7,2,1,6,4)$ & BB8 $(270^\circ)$ & (-1.80, -0.89) & -0.21 \\
6 & $(3,6\,\mid\,8,7,2,1,4)$ & Survival BB8 & (1.15, 0.98) & 0.07 \\
\hline
7 & $(3,5\,\mid\,8,7,2,1,6,4)$ & Tawn type 1 $(270^\circ)$ & (-1.14, 0.09) & -0.03 \\
\hline
\end{tabular}
\end{table}

For comparison, we additionally fit R-vine, C-vine, D-vine, and Gaussian vine copula models to the Abalone dataset. The Gaussian vine uses the same tree structure as the R-vine but restricts all pair copulas to the Gaussian family. Table~\ref{monggon:tab2} reports the log-likelihood, AIC, and BIC values for all models. According to both AIC and BIC, the SL-vine provides the best fit, followed by the C-vine, R-vine, $\Delta$-vine, D-vine, and Gaussian vine. This example demonstrates that the proposed SL-vine can outperform existing vine copula models.

\begin{table}[htbp]
	\centering
	\caption{Comparison of copula models on the abalone dataset using log-Likelihood, AIC, and BIC.}
	\small
	\begin{tabular}{l|ccc}
		\hline
		Copula model & Log-likelihood & AIC & BIC \\ \hline
		SL-vine copula        & $10,444.72$ & $-20,799.44$ & $-20,566.54$ \\ 
		C-vine copula         & $10,421.85$ & $-20,747.70$ & $-20,499.27$ \\ 
		R-vine copula         & $10,411.47$ & $-20,730.93$ & $-20,492.86$ \\ 
		$\Delta$-vine copula  & $10,328.65$ & $-20,563.31$ & $-20,320.06$ \\ 
		D-vine copula         & $9,905.04$  & $-19,720.08$ & $-19,487.18$ \\ 
		Gaussian vine copula  & $9,363.33$  & $-18,670.66$ & $-18,525.75$ \\ \hline
	\end{tabular}
	\label{monggon:tab2}
\end{table}
\subsection{Capital-Market dataset}\label{SecThai}
The second dataset considered is the Capital Market dataset of \cite{Nacaskul2023MultiParadigm}, which contains 14 financial variables representing domestic (Thailand), regional (Asia/Emerging Markets), and international (United States) market activity. The dataset comprises 3,389 daily observations from January~2009 to December~2021 and was compiled from the Stock Exchange of Thailand (SET), the Thai Bond Market Association (Thai BMA), and Bloomberg. For descriptions of the variables, the dataset includes domestic equity returns (SET and MAI), short- and long-term government bond yields (ZeroShort and ZeroLong), corporate bond spreads (CorpBond), exchange-rate returns (THB), foreign capital flows (EquityFlow and BondFlow), regional and global equity and bond returns (EMAsiaEquity, SP500, EMBond, and USBond), and regional and U.S. dollar currency indices (EMAsiaFX and USD). Table~\ref{tab:desc-stats-full} reports summary statistics for the financial variables.

\begin{table}[htbp]
	\centering
	\caption{Summary statistics of selected variables in the capital-market dataset}
	\footnotesize
	\begin{tabular}{l|ccccccc}
		\hline
		\textbf{Series} & \textbf{Mean} & \textbf{SD} & \textbf{Median} & 
		\textbf{Min} & \textbf{Max} & \textbf{Skew} & \textbf{Kurtosis} \\ \hline
		SET         & 0.0367   & 1.059   & 0.012    & $-11.43$       & 7.65      & $-0.923$ & 14.91 \\
		MAI         & 0.0368   & 1.102   & 0.044    & $-8.00$        & 8.05      & $-1.070$ & 11.51 \\
		ZeroShort   & $-0.0004$ & 0.013   & 0.000    & $-0.25$        & 0.09      & $-5.679$ & 99.13 \\
		ZeroLong    & $-0.0003$ & 0.027   & 0.000    & $-0.21$        & 0.10      & 0.652    & 13.96 \\
		CorpBond    & $-0.0001$ & 0.011   & 0.000    & $-0.12$        & 0.10      & $-0.008$ & 19.58 \\
		THB         & 0.0015    & 0.286   & 0.010    & $-1.56$        & 1.37      & $-0.068$ & 4.87 \\
		EquityFlow  & $-7.2057$ & 66.876  & $-2.42$  & $-543.16$      & 826.99    & 0.584    & 16.92 \\
		BondFlow    & 1,082.91  & 4,571.04 & 154.96   & $-26,973.48$  & 37,136.74 & 1.553    & 10.40 \\
		EMAsiaEquity & 0.0384   & 1.107   & 0.070    & $-5.87$        & 5.62      & $-0.341$ & 6.24 \\
		SP500       & 0.0483    & 1.125   & 0.050    & $-12.77$       & 8.97      & $-0.692$ & 16.76 \\
		EMBond      & 0.0289    & 0.252   & 0.030    & $-3.71$        & 2.02      & $-2.538$ & 37.42 \\
		USBond      & 0.0164    & 0.198   & 0.020    & $-2.19$        & 1.25      & $-0.735$ & 10.59 \\
		EMAsiaFX    & 0.0005    & 0.234   & 0.010    & $-1.64$        & 1.32      & $-0.219$ & 5.74 \\
		USD         & 0.0031    & 0.397   & 0.000    & $-2.51$        & 2.26      & 0.038     & 5.29 \\ \hline
	\end{tabular}
	\label{tab:desc-stats-full}
\end{table}

To assess model performance, we fitted the R-, C-, D-, Gaussian-, SL-, and $\Delta$-vine copulas. As shown in Table~\ref{monggon:tabThai}, the R-vine, SL-vine, and $\Delta$-vine consistently rank among the top three models under all evaluation criteria. Based on the log-likelihood, the SL-vine provides the best fit. According to the AIC, the R-vine ranks first, followed by the SL-vine and $\Delta$-vine. In contrast, the BIC selects the $\Delta$-vine as the best model, followed by the R-vine, while the SL-vine yields the largest BIC value among the three. The corresponding structures differ, as illustrated in Figure~\ref{TOP3}, which displays the first tree $T_1$ of these top-performing models.

\begin{table}[htbp]
	\centering
	\caption{Comparison of copula models for capital-market dataset based on Log-Likelihood, AIC, and BIC.}
	\small
	\begin{tabular}{l|ccc}
		\hline
		Copula model & Log-likelihood & AIC & BIC \\ \hline
		R-vine copula        & $8,038.15$ & {\color{red}$-15,790.31$} & $-14,913.96$ \\ 
		$\Delta$-vine copula & $8,034.66$ & $-15,787.31$ & {\color{red}$-14,923.22$} \\ 
		SL-vine copula       & {\color{red}$8,045.07$} & $-15,790.13$ & $-14,870.89$ \\ 
		C-vine copula        & $7,969.66$ & $-15,653.32$ & $-14,776.98$ \\ 
		D-vine copula        & $7,900.42$ & $-15,486.85$ & $-14,524.71$ \\ 
		Gaussian vine copula & $6,615.55$ & $-13,049.10$ & $-12,491.43$ \\ \hline
	\end{tabular}
	\label{monggon:tabThai}
\end{table}

\begin{figure}[htbp]
\centering
\begin{tikzpicture}[every node/.style={thick, inner sep=1pt, font=\normalsize}, node distance=0.75 cm]
    
    % T1 layer
    \node (T1) at (-2,0) {$\Delta$-vine:};
    
    \node[circle, draw] (2) [right=2 cm of T1] {$7$};
    \node[circle, draw] (1) [left=of 2] {$8$};
    \node[circle, draw] (4) [right=of 2] {$1$};
    \node[circle, draw] (5) [right=of 4] {$9$};
    \node[circle, draw] (6) [right=of 5] {$11$};
     \node[circle, draw] (9) [right=of 6] {$12$};
     \node[circle, draw] (10) [right=of 9] {$4$};
     \node[circle, draw] (11) [right=of 10] {$5$};

    \node[circle, draw] (7) [above left =of 5] {$13$};
    \node[circle, draw] (8) [above left =of 4] {$2$};
    \node[circle, draw] (12) [above left =of 10] {$3$};
    \node[circle, draw] (x) [above left =of 7] {$6$};
    \node[circle, draw] (y) [ right =of x] {$14$};
    \node[circle, draw] (z) [right =of y] {$10$};

    \draw (1) -- (2);
    \draw (2) -- (4);
    \draw (4) -- (5);
    \draw (5) -- (6);
    \draw (6) -- (9);
    \draw (9) -- (10);
    \draw (10) -- (11);
    \draw (5) -- (7);
    \draw (8) --  (4);
    \draw (10) --  (12);
    \draw (x) --  (7);
    \draw (y) --  (7);
    \draw (z) --  (y);

    % T1 layer
    \node (T1) at (-2,-2.5) {R-vine:};
    
    \node[circle, draw] (2) [right=2 cm of T1] {$7$};
    \node[circle, draw] (1) [left=of 2] {$8$};
    \node[circle, draw] (4) [right=of 2] {$1$};
    \node[circle, draw] (5) [right=of 4] {$9$};
    \node[circle, draw] (6) [right=of 5] {$11$};
     \node[circle, draw] (9) [right=of 6] {$12$};
     \node[circle, draw] (10) [right=of 9] {$4$};
     \node[circle, draw] (11) [right=of 10] {$5$};

    \node[circle, draw] (7) [above left =of 5] {$13$};
    \node[circle, draw] (8) [above left =of 4] {$2$};
    \node[circle, draw] (12) [above left =of 10] {$3$};
    \node[circle, draw] (x) [above left =of 7] {$6$};
    \node[circle, draw] (y) [ right =of x] {$10$};
    \node[circle, draw] (z) [right =of 7] {$14$};

    \draw (1) -- (2);
    \draw (2) -- (4);
    \draw (4) -- (5);
    \draw (5) -- (6);
    \draw (6) -- (9);
    \draw (9) -- (10);
    \draw (10) -- (11);
    \draw (5) -- (7);
    \draw (8) --  (4);
    \draw (10) --  (12);
    \draw (x) --  (7);
    \draw (y) --  (7);
    \draw (z) --  (7);
    
    % T1 layer
    \node (T1) at (-2,-4) {SL-vine:};
    
    \node[circle, draw] (2) [right=2 cm of T1] {$2$};
    \node[circle, draw] (1) [left=of 2] {$4$};
    \node[circle, draw] (4) [right=of 2] {$9$};
    \node[circle, draw] (5) [right=of 4] {$11$};
    \node[circle, draw] (6) [right=of 5] {$12$};
     \node[circle, draw] (9) [right=of 6] {$10$};
     \node[circle, draw] (10) [right=of 9] {$13$};
     \node[circle, draw] (11) [right=of 10] {$14$};

    \node[circle, draw] (7) [above left =of 5] {$5$};
    \node[circle, draw] (8) [above left =of 4] {$3$};
    \node[circle, draw] (3) [above left =of 2] {$1$};
    \node[circle, draw] (12) [above left =of 10] {$8$};
    \node[circle, draw] (13) [above left =of 9] {$7$};
    \node[circle, draw] (14) [above left =of 6] {$6$};

    \draw (1) -- (2);
    \draw (2) -- (4);
    \draw (4) -- (5);
    \draw (5) -- (6);
    \draw (6) -- (9);
    \draw (9) -- (10);
    \draw (10) -- (11);
    \draw (5) -- (7);
    \draw (8) --  (4);
    \draw (3) --  (2);
    \draw (6) --  (14);
    \draw (9) --  (13);
    \draw (10) --  (12);

\end{tikzpicture}
\caption{The $T_1$ trees of the top three R-, $\Delta$-, and SL-vines, respectively.}\label{TOP3}
\end{figure}

\newpage
The BIC values exhibit a clearer separation among the models than the AIC values, which differ only slightly. To examine this further, we consider the AIC difference, defined as
$$\Delta_i = \text{AIC}_i - \text{AIC}_{\min},$$
where $i$ denotes the vine type and $\text{AIC}_{\min}$ is the minimum AIC value. According to \cite{burnham2004multimodel}, values of $\Delta_i$ less than 4 indicate that the difference is not statistically significant. In this case, $\text{AIC}_{\min}=\text{AIC}_{\text{R}}$.            Figure~\ref{graph AIC} shows that $\Delta_{\text{SL}}=0.18<4$ and $\Delta_{\Delta}=3.00<4$. These differences are not statistically significant. Therefore, we rely on the BIC for the final model selection, which identifies the $\Delta$-vine copula as the best-fitting model for the capital-market dataset.

\begin{figure}[htbp]
\centering
\begin{tikzpicture}[scale=0.75]

% แกน
\draw[->, thick] (0,0) -- (0,3.5) node[above] {$\Delta_i$};
\draw[->, thick] (0,0) -- (3.5,0) node[right] {$i$};

% tick y-axis
\draw (0.1,0) -- (-0.1,0) node[left] {0.00};
\draw (0.1,1) -- (-0.1,1) node[left] {0.18};
\draw (0.1,3) -- (-0.1,3) node[left] {3.00};

% tick x-axis + labels
\node[below] at (1.5,0)  {SL};
\node[below] at (3,0)  {$\Delta$};
\node[below] at (0,0) {R};

% จุด scatter (ตำแหน่งโดยประมาณจากรูป)
\fill[blue] (0,0) circle (2pt);          % R
\fill[blue] (1.5,1) circle (2pt);       % SL
\fill[blue] (3,3) circle (2pt);        % Δ

\end{tikzpicture}
\caption{Comparison of $\Delta_i$ values across different vine structures (R-vine, SL-vine, and $\Delta$-vine).}\label{graph AIC}
\end{figure}

\subsection{Wisconsin breast cancer dataset}\label{SecBCW}
This section examines the Wisconsin Breast Cancer dataset from the University of California–Irvine Machine Learning Repository~(\url{https://doi.org/10.24432/C5DW2B}), which comprises diagnostic and prognostic subsets.
\subsubsection{Diagnostic group}
For the diagnostic subset, the dataset contains measurements extracted from digitized fine-needle aspirate (FNA) images of breast tissue and consists of 569 tumor samples classified as either malignant or benign. For each sample, ten morphological characteristics of the cell nuclei are recorded: radius, texture, perimeter, area, smoothness, compactness, concavity, concave points, symmetry, and fractal dimension. Each characteristic is summarized by its mean, standard deviation, and maximum value (defined as the mean of the three largest observations), yielding a total of 30 numerical variables. These features provide detailed descriptions of nuclear morphology and are commonly used for breast cancer diagnosis. In this study, only the mean values of the ten features are retained, reducing the dimensionality from 30 to 10 variables. Summary statistics for the Wisconsin breast cancer (diagnostic) dataset are reported in Table~\ref{tab:desc-stats-diabreastcancer}.

\begin{table}[htbp]
	\centering
	\caption{Descriptive statistics of diagnostic Wisconsin breast cancer dataset}
	\footnotesize
	\begin{tabular}{l|ccccccc}
		\hline
		\textbf{Feature} & \textbf{Mean} & \textbf{SD} & \textbf{Median} & 
		\textbf{Min} & \textbf{Max} & \textbf{Skew} & \textbf{Kurtosis} \\ \hline
		Radius             & 14.13  & 3.52   & 13.37  & 6.98   & 28.11   & 0.94 & 0.81 \\
		Texture            & 19.29  & 4.30   & 18.84  & 9.71   & 39.28   & 0.65 & 0.73 \\
		Perimeter          & 91.97  & 24.30  & 86.24  & 43.79  & 188.50  & 0.99 & 0.94 \\
		Area               & 654.89 & 351.91 & 551.10 & 143.50 & 2501.00 & 1.64 & 3.59 \\
		Smoothness         & 0.10   & 0.01   & 0.10   & 0.05   & 0.16    & 0.45 & 0.82 \\
		Compactness        & 0.10   & 0.05   & 0.09   & 0.02   & 0.35    & 1.18 & 1.61 \\
		Concavity          & 0.09   & 0.08   & 0.06   & 0.00   & 0.43    & 1.39 & 1.95 \\
		Concave points     & 0.05   & 0.04   & 0.03   & 0.00   & 0.20    & 1.17 & 1.03 \\
		Symmetry           & 0.18   & 0.03   & 0.18   & 0.11   & 0.30    & 0.72 & 1.25 \\
		Fractal dimension  & 0.06   & 0.01   & 0.06   & 0.05   & 0.10    & 1.30 & 2.95 \\ \hline
	\end{tabular}
	\label{tab:desc-stats-diabreastcancer}
\end{table}

To assess how well each model captures the dependence structure among these features, we compare several vine copula models using the same procedure as in the previous sections. Figure~\ref{TOP3dia} displays the first tree $T_1$ of the $\Delta$-, R-, and SL-vine models. The $\Delta$- and R-vines produce identical $T_1$ structures, with the $\Delta$-vine additionally imposing a node-degree constraint. This is consistent with the fact that both models attain identical log-likelihood, AIC, and BIC values. Furthermore, all three trees have node degrees of at most three, suggesting that the dependence structure of this dataset can be adequately represented by vine copulas with degree-constrained tree structures.

\begin{figure}[htbp]
\centering
\begin{tikzpicture}[every node/.style={thick, inner sep=1pt, font=\normalsize}, node distance=0.75 cm]

    % T1 layer
    \node (T1) at (-2,0) {$\Delta$-vine:};
    
    \node[circle, draw] (2) [right=2 cm of T1] {$1$};
    \node[circle, draw] (1) [left=of 2] {$4$};
    \node[circle, draw] (4) [right=of 2] {$3$};
    \node[circle, draw] (5) [right=of 4] {$8$};
    \node[circle, draw] (6) [right=of 5] {$7$};
    \node[circle, draw] (9) [right=of 6] {$6$};
    \node[circle, draw] (10) [right=of 9] {$5$};
    \node[circle, draw] (11) [right=of 10] {$10$};

    \node[circle, draw] (x) [above left=of 9] {$9$};
    \node[circle, draw] (y) [above left=of 6] {$2$};
     
    \draw (1) -- (2);
    \draw (2) -- (4);
    \draw (4) -- (5);
    \draw (5) -- (6);
    \draw (6) -- (9);
    \draw (9) -- (10);
    \draw (10) -- (11);
    \draw (x) -- (9);
    \draw (y) -- (6);
    
    % T1 layer
    \node (T1) at (-2,-1.5) {R-vine:};
    
    \node[circle, draw] (2) [right=2 cm of T1] {$1$};
    \node[circle, draw] (1) [left=of 2] {$4$};
    \node[circle, draw] (4) [right=of 2] {$3$};
    \node[circle, draw] (5) [right=of 4] {$8$};
    \node[circle, draw] (6) [right=of 5] {$7$};
    \node[circle, draw] (9) [right=of 6] {$6$};
    \node[circle, draw] (10) [right=of 9] {$5$};
    \node[circle, draw] (11) [right=of 10] {$10$};

    \node[circle, draw] (x) [above left=of 9] {$9$};
    \node[circle, draw] (y) [above left=of 6] {$2$};
     
    \draw (1) -- (2);
    \draw (2) -- (4);
    \draw (4) -- (5);
    \draw (5) -- (6);
    \draw (6) -- (9);
    \draw (9) -- (10);
    \draw (10) -- (11);
    \draw (x) -- (9);
    \draw (y) -- (6);

    % T1 layer
    \node (T1) at (-2,-3) {SL-vine:};
    
    \node[circle, draw] (2) [right=2 cm of T1] {$4$};
    \node[circle, draw] (1) [left=of 2] {$5$};
    \node[circle, draw] (4) [right=of 2] {$3$};
    \node[circle, draw] (5) [right=of 4] {$4$};
    \node[circle, draw] (6) [right=of 5] {$1$};
     \node[circle, draw] (9) [right=of 6] {$6$};

    \node[circle, draw] (7) [above left =of 5] {$10$};
    \node[circle, draw] (8) [above left =of 4] {$8$};
    \node[circle, draw] (3) [above left =of 2] {$3$};
    \node[circle, draw] (14) [above left =of 6] {$2$};

    \draw (1) -- (2);
    \draw (2) -- (4);
    \draw (4) -- (5);
    \draw (5) -- (6);
    \draw (6) -- (9);
    
    \draw (5) -- (7);
    \draw (8) --  (4);
    \draw (3) --  (2);
    \draw (6) -- (14);
\end{tikzpicture}
\caption{The $T_1$ trees of the top three $\Delta$-, R-,  and SL-vines, respectively.}\label{TOP3dia}
\end{figure} 

Table~\ref{monggon:tabdia} reports model comparisons based on log-likelihood, AIC, and BIC. The $\Delta$-vine and R-vine provide the best fit, achieving identical log-likelihood values and the lowest AIC and BIC scores. The SL-vine ranks third under both criteria but still outperforms the C-vine and D-vine copulas across all evaluation measures.

\begin{table}[htbp]
	\centering
	\caption{Comparison of copula models for diagnostic Wisconsin breast cancer dataset based on Log-Likelihood, AIC, and BIC.}
	\small
	\begin{tabular}{l|ccc}
		\hline
		Copula model & Log-likelihood & AIC & BIC \\ \hline
		$\Delta$-vine copula & $6,449.38$ & $-12,786.77$ & $-12,543.51$ \\
		R-vine copula       & $6,449.38$ & $-12,786.77$ & $-12,543.51$ \\ 
		SL-vine copula      & $6,442.43$ & $-12,754.86$ & $-12,472.51$ \\
		C-vine copula       & $6,334.35$ & $-12,544.70$ & $-12,275.38$ \\  
		D-vine copula       & $6,227.02$ & $-12,320.05$ & $-12,029.01$ \\ \hline
	\end{tabular}
	\label{monggon:tabdia}
\end{table}
\newpage
\subsubsection{Prognostic group}
To further assess the robustness of the results, we analyze the prognostic Wisconsin breast cancer dataset. The dataset contains 198 observations and 34 variables, including an identification number, clinical outcome variables, and measurements extracted from digitized fine-needle aspiration (FNA) images of breast tissue. The response variable classifies patients as recurrent or nonrecurrent, while the variable \textit{Time} records either the time to recurrence or the disease-free survival time, depending on the outcome.

As in the diagnostic dataset, the same ten morphological characteristics of the cell nuclei are considered. For each characteristic, the mean, standard deviation, and maximum value are recorded, yielding 30 quantitative variables. In this study, only the mean values are retained, reducing the dimensionality from 30 to 10 variables. Two additional clinical variables are included: tumor size (cm) and the number of positive axillary lymph nodes. Summary statistics for the prognostic Wisconsin breast cancer dataset are reported in Table~\ref{tab:desc-stats-pro}.

\begin{table}[htbp]
	\centering
	\caption{Descriptive statistics of prognostic Wisconsin breast cancer dataset}
	\footnotesize
	\begin{tabular}{l|ccccccc}
		\hline
		\textbf{Feature} & \textbf{Mean} & \textbf{SD} & \textbf{Median} & 
		\textbf{Min} & \textbf{Max} & \textbf{Skew} & \textbf{Kurtosis} \\ \hline
		Time              & 46.94  & 34.52  & 39.50  & 1.00   & 125.00  & 0.51 & -0.84 \\
		Radius            & 17.40  & 3.17   & 17.29  & 10.95  & 27.22   & 0.32 & -0.31 \\
		Texture           & 22.30  & 4.34   & 21.80  & 10.38  & 39.28   & 0.54 & -0.83 \\
		Perimeter         & 114.78 & 21.43  & 113.70 & 71.90  & 182.10  & 0.39 & -0.16 \\
		Area              & 969.09 & 353.16 & 929.10 & 361.60 & 2250.00 & 0.72 & 0.39 \\
		Smoothness        & 0.10   & 0.01   & 0.10   & 0.07   & 0.14    & 0.42 & 0.35 \\
		Compactness       & 0.14   & 0.05   & 0.13   & 0.05   & 0.31    & 0.60 & 0.38 \\
		Concavity         & 0.16   & 0.07   & 0.15   & 0.02   & 0.43    & 0.67 & 0.65 \\
		Concave points    & 0.09   & 0.03   & 0.09   & 0.02   & 0.20    & 0.69 & 0.66 \\
		Symmetry          & 0.19   & 0.03   & 0.19   & 0.13   & 0.30    & 0.75 & 0.91 \\
		Fractal dimension & 0.06   & 0.01   & 0.06   & 0.05   & 0.10    & 0.97 & 1.71 \\
		Tumor size        & 2.87   & 1.95   & 2.50   & 0.40   & 10.00   & 1.70 & 2.96 \\
		Lymph node status & 3.21   & 5.48   & 1.00   & 0.00   & 27.00   & 2.22 & 4.75 \\ \hline
	\end{tabular}
	\label{tab:desc-stats-pro}
\end{table}

Table~\ref{monggon:tabpro} compares model performance in capturing the dependence structure among the variables. The $\Delta$-vine and R-vine copulas achieve identical log-likelihood, AIC, and BIC values and provide the best fit among all models considered. The SL-vine ranks third and continues to outperform the C-vine and D-vine copulas.

\begin{table}[htbp]
	\centering
	\caption{Comparison of copula models for prognostic wisconsin breast cancer dataset based on Log-Likelihood, AIC, and BIC.}
	\small
	\begin{tabular}{l|ccc}
		\hline
		Copula model & Log-likelihood & AIC & BIC \\ \hline
		$\Delta$-vine copula & $2,100.30$ & $-4,018.61$ & $-3,721.23$ \\
		R-vine copula       & $2,100.30$ & $-4,018.61$ & $-3,721.23$ \\ 
		SL-vine copula      & $2,078.23$ & $-3,998.46$ & $-3,740.30$ \\
		C-vine copula       & $2,083.89$ & $-3,981.79$ & $-3,677.88$ \\  
		D-vine copula       & $1,965.62$ & $-3,777.23$ & $-3,525.61$ \\ \hline
	\end{tabular}
	\label{monggon:tabpro}
\end{table}

Figure~\ref{TOP3pro} shows the first trees $T_1$ of the $\Delta$-, R-, and SL-vine models. As in the Diagnostic dataset, the $\Delta$-vine and R-vine produce identical first-tree structures, consistent with their identical evaluation criteria. In contrast, the SL-vine exhibits a more branched structure while maintaining the node-degree constraint. Furthermore, all three trees have node degrees not exceeding three, indicating that the dependence structure of this dataset can be adequately represented by vine copulas with a maximum node degree of three. These results further support the effectiveness of the $\Delta$-vine and R-vine relative to the other vine copula models considered.

\begin{figure}[htbp]
\centering
\begin{tikzpicture}[every node/.style={thick, inner sep=1pt, font=\normalsize}, node distance=0.75 cm]

    % T1 layer
    \node (T1) at (-2,0) {$\Delta$-vine:};
    
    \node[circle, draw] (2) [right=2 cm of T1] {$11$};
    \node[circle, draw] (1) [left=of 2] {$6$};
    \node[circle, draw] (4) [right=of 2] {$7$};
    \node[circle, draw] (5) [right=of 4] {$8$};
    \node[circle, draw] (6) [right=of 5] {$9$};
    \node[circle, draw] (9) [right=of 6] {$4$};
    \node[circle, draw] (10) [right=of 9] {$2$};
    \node[circle, draw] (11) [right=of 10] {$5$};
    \node[circle, draw] (12) [right=of 11] {$12$};
    \node[circle, draw] (13) [above left=of 12] {$13$};

    \node[circle, draw] (x) [above left=of 4] {$10$};
    \node[circle, draw] (z) [above left=of 6] {$3$};
    \node[circle, draw] (y) [above left=of 9] {$1$};
     
    \draw (1) -- (2);
    \draw (2) -- (4);
    \draw (4) -- (5);
    \draw (5) -- (6);
    \draw (6) -- (9);
    \draw (9) -- (10);
    \draw (10) -- (11);
    \draw (11) -- (12);
    \draw (12) -- (13);
    \draw (x) -- (4);
    \draw (y) -- (9);
    \draw (z) -- (y);
    
    % T1 layer
    \node (T1) at (-2,-2) {R-vine:};
    
    \node[circle, draw] (2) [right=2 cm of T1] {$11$};
    \node[circle, draw] (1) [left=of 2] {$6$};
    \node[circle, draw] (4) [right=of 2] {$7$};
    \node[circle, draw] (5) [right=of 4] {$8$};
    \node[circle, draw] (6) [right=of 5] {$9$};
    \node[circle, draw] (9) [right=of 6] {$4$};
    \node[circle, draw] (10) [right=of 9] {$2$};
    \node[circle, draw] (11) [right=of 10] {$5$};
    \node[circle, draw] (12) [right=of 11] {$12$};
    \node[circle, draw] (13) [above left=of 12] {$13$};

    \node[circle, draw] (x) [above left=of 4] {$10$};
    \node[circle, draw] (z) [above left=of 6] {$3$};
    \node[circle, draw] (y) [above left=of 9] {$1$};
     
    \draw (1) -- (2);
    \draw (2) -- (4);
    \draw (4) -- (5);
    \draw (5) -- (6);
    \draw (6) -- (9);
    \draw (9) -- (10);
    \draw (10) -- (11);
    \draw (11) -- (12);
    \draw (12) -- (13);
    \draw (x) -- (4);
    \draw (y) -- (9);
    \draw (z) -- (y);
    
    % T1 layer
    \node (T1) at (-2,-4) {SL-vine:};

    \node[circle, draw] (2) [right=2 cm of T1] {$5$};
    \node[circle, draw] (1) [left=of 2] {$2$};
    \node[circle, draw] (4) [right=of 2] {$4$};
    \node[circle, draw] (5) [right=of 4] {$9$};
    \node[circle, draw] (6) [right=of 5] {$8$};
     \node[circle, draw] (9) [right=of 6] {$7$};
     \node[circle, draw] (y) [right=of 9] {$11$};

    \node[circle, draw] (7) [above left =of 5] {$3$};
    \node[circle, draw] (8) [above left =of 4] {$6$};
    \node[circle, draw] (3) [above left =of 2] {$10$};
    \node[circle, draw] (14) [above left =of 6] {$12$};
    \node[circle, draw] (x) [above left =of 9] {$1$};
    \node[circle, draw] (z) [above left =of y] {$13$};

    \draw (1) -- (2);
    \draw (2) -- (4);
    \draw (4) -- (5);
    \draw (5) -- (6);
    \draw (6) -- (9);
    
    \draw (5) -- (7);
    \draw (8) --  (4);
    \draw (3) --  (2);
    \draw (6) -- (14);
    \draw (y) -- (9);
    \draw (x) -- (9);
    \draw (z) -- (y);
\end{tikzpicture}
\caption{The $T_1$ trees of the top three $\Delta$-, R-,  and SL-vines, respectively.}\label{TOP3pro}
\end{figure} 

\section{Conclusions and further research}\label{sec7}
In this paper, we propose modified vine copula models with degree-constrained tree structures for modeling multivariate dependence. We first introduce the stem-and-leaf (SL-) vine, a hybrid structure combining features of C-vines and D-vines. The SL-vine can capture both hierarchical and sequential dependencies while restricting node degrees to at most three. We establish its theoretical validity by showing that it satisfies the R-vine tree sequence conditions and possesses a well-defined joint density function. An algorithm for constructing the SL-vine is also developed for practical implementation. To overcome the structural rigidity of the SL-vine, we further introduce the degree-of-at-most-three vine, or $\Delta$-vine, as a generalization of the SL-vine. The $\Delta$-vine remains within the R-vine framework, allowing existing theoretical results and computational tools to be directly applied. Its cycle-elimination mechanism provides a systematic and interpretable procedure for automated vine construction while preserving flexibility in the tree structure. In addition, we demonstrate how SL- and $\Delta$-vine copulas can be simulated within the R-vine framework, providing practical guidance for generating synthetic multivariate data from the proposed models.

The proposed models are evaluated using three benchmark datasets: abalone, capital-market, and Wisconsin breast cancer. For the abalone dataset, the SL-vine achieves the best performance among the competing models. For the capital-market dataset, the $\Delta$-vine attains the best overall performance, although its improvement over the SL- and R-vines is not statistically significant. For both the diagnostic and prognostic breast cancer datasets, the $\Delta$-vine and R-vine yield identical best-fitting models. Overall, the results demonstrate that the SL- and $\Delta$-vine copulas provide competitive and effective alternatives to existing vine copula models for representing complex dependence structures.

An interesting observation is that the more general R-vine does not always outperform the proposed models. Although the R-vine allows a broader class of tree structures, its sequential maximum-spanning-tree construction does not necessarily explore all structures with maximal overall dependence. Consequently, the resulting R-vine may not coincide with the optimal SL- or $\Delta$-vine configuration. This suggests several directions for future research, including identifying conditions under which particular vine structures are preferable and developing selection procedures that account for both model flexibility and structural constraints.

\section*{Acknowledgement}
The author appreciates the referees for their thorough review of the manuscript and their valuable feedback. The His Royal Highness Crown Prince Maha Vajiralongkorn Scholarship from the Graduate School, Chulalongkorn University to commemorate the 72nd anniversary of his Majesty King Bhumibala Aduladeja is gratefully acknowledged.

\section*{Funding}
The authors did not receive support from any organization for the submitted work.

%\section*{Notes on contributor(s)}

%\section*{Nomenclature/Notation}

%\section*{Notes}

\bibliographystyle{tfs}
\bibliography{reference.bib}

@inproceedings{sklar1959fonctions,
  title={Fonctions de r{\'e}partition {\`a} n dimensions et leurs marges},
  author={Sklar, M},
  booktitle={Annales de l'ISUP},
  volume={8},
  number={3},
  pages={229--231},
  year={1959}
}

@article{jones2009formula,
  title={The formula that felled Wall St},
  author={Jones, Sam},
  journal={Financial Times},
  volume={24},
  pages={2009},
  year={2009}
}

@article{aas2009pair,
  title={Pair-copula constructions of multiple dependence},
  author={Aas, Kjersti and Czado, Claudia and Frigessi, Arnoldo and Bakken, Henrik},
  journal={Insurance: Mathematics and economics},
  volume={44},
  number={2},
  pages={182--198},
  year={2009},
  publisher={Elsevier}
}

@article{joe1996families,
  title={Families of m-variate distributions with given margins and m (m-1)/2 bivariate dependence parameters},
  author={Joe, Harry},
  journal={Lecture notes-monograph series},
  pages={120--141},
  year={1996},
  publisher={JSTOR}
}

@article{bedford2001probability,
  title={Probability density decomposition for conditionally dependent random variables modeled by vines},
  author={Bedford, Tim and Cooke, Roger M},
  journal={Annals of Mathematics and Artificial intelligence},
  volume={32},
  number={1},
  pages={245--268},
  year={2001},
  publisher={Springer}
}

@article{beare2015vine,
  title={Vine copula specifications for stationary multivariate Markov chains},
  author={Beare, Brendan K and Seo, Juwon},
  journal={Journal of Time Series Analysis},
  volume={36},
  number={2},
  pages={228--246},
  year={2015},
  publisher={Wiley Online Library}
}

@book{kurowicka2006uncertainty,
  title={Uncertainty analysis with high dimensional dependence modelling},
  author={Kurowicka, Dorota and Cooke, Roger M},
  year={2006},
  publisher={John Wiley \& Sons}
}

@article{HOBAEKHAFF20101296,
title = {On the simplified pair-copula construction — Simply useful or too simplistic?},
journal = {Journal of Multivariate Analysis},
volume = {101},
number = {5},
pages = {1296-1310},
year = {2010},
issn = {0047-259X},
doi = {https://doi.org/10.1016/j.jmva.2009.12.001},
url = {https://www.sciencedirect.com/science/article/pii/S0047259X09002206},
author = {Ingrid {Hobæk Haff} and Kjersti Aas and Arnoldo Frigessi}
}

@article{STOBER2013101,
title = {Simplified pair copula constructions—Limitations and extensions},
journal = {Journal of Multivariate Analysis},
volume = {119},
pages = {101-118},
year = {2013},
issn = {0047-259X},
doi = {https://doi.org/10.1016/j.jmva.2013.04.014},
url = {https://www.sciencedirect.com/science/article/pii/S0047259X13000754},
author = {Jakob Stöber and Harry Joe and Claudia Czado}
}

@article{dissmann2013selecting,
  title={Selecting and estimating regular vine copulae and application to financial returns},
  author={Dissmann, Jeffrey and Brechmann, Eike C and Czado, Claudia and Kurowicka, Dorota},
  journal={Computational Statistics \& Data Analysis},
  volume={59},
  pages={52--69},
  year={2013},
  publisher={Elsevier}
}

@article{czado2012maximum,
  title={Maximum likelihood estimation of mixed C-vines with application to exchange rates},
  author={Czado, Claudia and Schepsmeier, Ulf and Min, Aleksey},
  journal={Statistical Modelling},
  volume={12},
  number={3},
  pages={229--255},
  year={2012},
  publisher={SAGE Publications Sage India: New Delhi, India}
}

@article{brechmann2010truncated,
  title={Truncated and simplified regular vines and their applications},
  author={Brechmann, Eike},
  year={2010}
}

@article{kuhn1955hungarian,
  title={The Hungarian method for the assignment problem},
  author={Kuhn, Harold W},
  journal={Naval research logistics quarterly},
  volume={2},
  number={1-2},
  pages={83--97},
  year={1955},
  publisher={Wiley Online Library}
}

@article{nagler2024solving,
  title={Solving estimating equations with copulas},
  author={Nagler, Thomas and Vatter, Thibault},
  journal={Journal of the American Statistical Association},
  volume={119},
  number={546},
  pages={1168--1180},
  year={2024},
  publisher={Taylor \& Francis}
}

@article{Nacaskul2023MultiParadigm,
  title        = {Multi‐Paradigm Analysis of Thai Capital Market Linkages: Bivariate/Vine Copulas, Granger Causality, Network Centrality, and Graph Neural Network/Graph Embedding Approaches},
  author       = {Nacaskul, Poomjai and Kalakan, Kongkan and Khlaisamniang, Pitikorn and Veerabulyarith, Puvarith and Sukcharoenchaikul, Isariyaporn},
  year         = {2023},
  journal      = {Papers with Code},
  howpublished = {\url{https://paperswithcode.com/paper/multi-paradigm-analysis-of-thai-capital}}
}

@article{burnham2004multimodel,
  title={Multimodel inference: understanding AIC and BIC in model selection},
  author={Burnham, Kenneth P and Anderson, David R},
  journal={Sociological methods \& research},
  volume={33},
  number={2},
  pages={261--304},
  year={2004},
  publisher={Sage Publications Sage CA: Thousand Oaks, CA}
}

\appendix
\section{Di{\ss}mann’s algorithm}\label{app:Diss}
\begin{algorithm}[H]
\caption{Di{\ss}mann’s Algorithm for Vine Copula Model}
\label{alg:discmann1}
\begin{algorithmic}[1]
\State Compute the weights $\tau_{i,j}$ for all index pairs $\{i, j\}$, with $1 \leq i < j \leq n$.
\State Select the maximum spanning tree:
$$T_1 = \underset{
\substack{
T=(N,E)\ \text{spanning tree} 
}
}{\arg\max} \sum_{e = (a_e, b_e) \in E} \tau_{a_e, b_e}.$$
\For{each edge $e \in E_1$}
    \State Choose a copula $C_{a_e, b_e}$ with estimated parameters $\widehat{\theta}_{a_e, b_e}$.
    \For{$k = 1 \text{ to } n$}
        \State Compute pseudo-observations:
        \begin{align*}
          u_{k,a_e|b_e,\widehat{\theta}_{a_e,b_e}}&=C_{a_e|b_e}(u_{k,a_e} \mid u_{k,b_e}; \widehat{\theta}_{a_e,b_e}) \text{ and } \\
        u_{k,b_e|a_e,\widehat{\theta}_{a_e,b_e}}&=C_{b_e|a_e}(u_{k,b_e} \mid u_{k,a_e}; \widehat{\theta}_{a_e,b_e})  
        \end{align*}
    \EndFor
\EndFor
\For{$m = 2$ to $d-1$}
    \State Compute weights $\tau_{a_e,b_e \mid D_e}$ for all eligible edges $(a_e, b_e \mid D_e)$ in $T_m$.
    \State Let $E_{P,m}$ be the set of all edges satisfying the proximity condition.
    \State Select the maximum spanning tree: 
    $$T_m = \underset{
\substack{
T=(N,E)\ \text{spanning tree with} E \subset E_{P,m}
}
}{\arg\max}\sum_{e \in E} \tau_{a_e,b_e \mid D_e}$$
   
    \For{each edge $e \in E_m$}
        \State Choose a pair copula $C_{a_e,b_e \mid D_e}$ with estimated parameters $\widehat{\theta}_{a_e,b_e \mid D_e}$.
        \For{$k = 1 \text{ to } n$}
            \State Compute pseudo-observations:
            \begin{align*}
            u_{k,a_e|b_e\cup D_e,\widehat{\theta}_{a_e,b_e;D_e}}&=C_{a_e|b_e \cup D_e}(u_{k,a_e} \mid u_{k,b_e}, \mathbf{u}_{k,D_e}; \widehat{\theta}_{a_e,b_e;D_e})
                \text{ and }\\
            u_{k,b_e|a_e\cup D_e,\widehat{\theta}_{a_e,b_e;D_e}}&=C_{b_e|a_e \cup D_e}(u_{k,b_e} \mid u_{k,a_e}, \mathbf{u}_{k,D_e}; \widehat{\theta}_{a_e,b_e;D_e})
            \end{align*}
        \EndFor
    \EndFor
\EndFor
\State \Return the model estimates $\left(\widehat{\mathcal{V}}, \widehat{\mathcal{B}}, \widehat{\boldsymbol{\theta}}\right)$
\end{algorithmic}
\end{algorithm}

\section{Algorithm for SL-Vine}\label{app:AlgSL}
\begin{algorithm}[H]
\caption{GetProblem(\textit{GRAPH})}
\label{alg:getproblem}
\begin{algorithmic}[1]
\State Set \textit{Triangles} $\gets \emptyset$ \Comment{To contain all triangles in \textit{GRAPH}}
\State Set \textit{ProblemPoints} $\gets \emptyset$ \Comment{To contain list of all 4-degree points in each triangle}
\State Set \textit{Problems} $\gets \emptyset$ \Comment{To contain the list of \{chain of triangles, and 4-degree point that is not in between triangles\}}
\State Set \textit{Checked} $\gets \emptyset$ \Comment{To contain checked edge from following for loop}
\For{each \textit{CheckingEdge} in all edges of \textit{GRAPH} not in \textit{Checked}}
    \If{there is another point that forms a triangle with \textit{CheckingEdge} to be a triangle in \textit{GRAPH}}
        \State Add all edges in that triangle to \textit{Checked}
        
        \State Add the triangle to \textit{Triangles}
        \State Add list of 4-degree points in the triangle (or empty list) to \textit{ProblemPoints}
    \Else
        \State Add \textit{CheckingEdge} to \textit{Checked}
    \EndIf
\EndFor
\If{there is an empty list in \textit{ProblemPoints}}
    \For{each empty list}
        \State Add \{triangle in\textit{ Triangles}, empty element\} to \textit{Problems}
    \EndFor
    \State Remove added triangles from \textit{Triangles}
    \While{\textit{Triangles} is not empty}
        \State Find a triangle in \textit{Triangles} with 1 element in its corresponding list in \textit{ProblemPoints}
        \State Let \textit{RemainingPoint} be that 1 element
        \State Set a 3-column matrix \textit{TriangleChain} with the first row be that triangle
        \State Remove the triangle from \textit{Triangles}
        \While{there is a triangle \textit{NextTriangle} in \textit{Triangles} with \textit{RemainingPoint}}
            \State Add \textit{NextTriangle} to next row of \textit{TriangleChain}
            \State Remove \textit{NextTriangle} from \textit{Triangles}
            \State Let \textit{RemainingPoint} be another point in the corresponding list of \textit{NextTriangle} in   \textit{ProblemPoints} or an empty if there is not any another point
        \EndWhile
        \State Add list of \{\textit{TriangleChain}, \textit{RemainingPoint}\} to \textit{Problems}
    \EndWhile
\EndIf
\State \Return \textit{Problems}
\end{algorithmic}
\end{algorithm}

\begin{algorithm}[H]
\caption{SL-vine Selection}
\label{alg:slvineselection}
\begin{algorithmic}[1]
\While{GetProblem(\textit{GRAPH}) is not empty}
    \For{each \{ \textit{TriangleChain}, \textit{RemainingPoint} \} in GetProblem(\textit{GRAPH})}
        \State Set the edge collection \textit{Pool} $\gets \emptyset$ \Comment{Pool for possible-to-delete edges}
        \If{\textit{RemainingPoint} is an empty element}
            \State Add all edges in \textit{TriangleChain} to \textit{Pool}
        \Else
            \State Add edges in \textit{TriangleChain} connected with \textit{RemainingPoint} to \textit{Pool}
        \EndIf
        \State Remove the edge with $\min{|\tau|}$ in \textit{Pool} from \textit{GRAPH}
    \EndFor
\EndWhile
\State \Return \textit{GRAPH}
\end{algorithmic}
\end{algorithm}

\section{Algorithm for $\Delta$-Vine}\label{app:AlgDel}
\begin{algorithm}[H]
\caption{$\Delta$-vine Selection}
\label{GetToDelete}
\begin{algorithmic}[1]
\State Set \textit{ToDelete} $\gets \emptyset$ \Comment{To contain all edges to be deleted from the line graph, making the graph to be in $\Delta$-vine structure}
\State Set \textit{Triangles} $\gets \emptyset$ \Comment{To contain all triangles in \textit{GRAPH}}
\State Set \textit{CheckedEdge} $\gets \emptyset$ \Comment{To contain checked edge from triangle finding process (For loop)}
\State Set $D_k \gets \emptyset$ \Comment{To contain checked triangles in the elimination process (While loop)}
\For{each \textit{CheckingEdge} in all edges of \textit{GRAPH} not in \textit{CheckedEdge}}
    \If{there exists a vertex $u$ such that $u$ forms a triangle with the endpoints of 
    \textit{CheckingEdge} in \textit{GRAPH}}
        \State Add all edges in that triangle to \textit{CheckedEdge}
    \Else
        \State Add \textit{CheckingEdge} to \textit{CheckedEdge}
    \EndIf
\EndFor
\While{\textit{Triangles} is not empty}
    \State Set point collection $V \gets$ intersection of \textit{Triangles} and \textit{CheckedTriangle}
    \If{$V = \emptyset$}
        \State Find edge $e_k$ with $\min{|\tau|}$ from all edges of all triangles in \textit{Triangles}
        \State Set $C_k \gets$ triangle in \textit{Triangles} that contains $e_k$
    \Else
        \State Select one point $v$ in $V$
        \State Set $C_k \gets$ triangle in \textit{Triangles} that contains $v$
        \State Set \textit{Pool} $\gets \emptyset$ \Comment{To contain possible-to-delete edges}
        \If{\textit{ToDelete} has an edge containing $v$}
            \State Find edge $e_k$ with $\min{|\tau|}$ from all edges in $C_k$ connected with $v$
        \Else
            \State Find edge $e_k$ with $\min{|\tau|}$ from all edges in $C_k$
        \EndIf
    \EndIf
    \State Add $e_k$ to \textit{ToDelete}
    \State Add $C_k$ to $D_k$
    \State Remove $C_k$ from \textit{Triangles}
\EndWhile
\State Remove all edges in \textit{ToDelete} from \textit{GRAPH}
\State \Return \textit{GRAPH}
\end{algorithmic}
\end{algorithm}

\end{document}